\documentclass[11pt]{article}
\usepackage[english]{babel}
\usepackage[utf8]{inputenc}
\usepackage{lmodern}
\usepackage[T1]{fontenc}
\usepackage{amsmath,amssymb,lmodern,amsthm}
\usepackage{marginnote}
\usepackage{mathtools}
\usepackage{graphicx}
\usepackage{mathrsfs}
\usepackage{bm}
\usepackage{url}
\usepackage{breakurl}
\usepackage{array}
\usepackage{tensor}
\usepackage{pifont}
\usepackage[toc,page]{appendix}
\usepackage[left=2.5cm,right=2.5cm,top=3cm,bottom=3cm]{geometry}
\usepackage{tikz}
\usetikzlibrary{arrows,positioning}
\usepackage{enumitem}
\usepackage{empheq}

\usepackage{cases}
\usepackage{csquotes}
\usepackage{accents}
\usepackage{setspace}

\theoremstyle{plain}
\usepackage{xargs}
\usepackage{xcolor}
\usepackage{subcaption}
\usepackage[justification=raggedright]{caption}
\usepackage{hyperref}
\usepackage{cleveref}
\usepackage{tocloft}

\hypersetup{
	colorlinks=true,
	urlcolor=blue,
	linkcolor={black},
	citecolor={green!40!black}
}
\usepackage[affil-it]{authblk}
\usepackage{thmtools}
\usepackage[backend=bibtex,style=numeric-comp,maxnames=4,sorting=none,url=false,hyperref=true,date=year,eprint=true,doi=true]{biblatex}

\newlist{properties}{enumerate}{10}
\setlist[properties]{label*=\roman*)}
\crefname{propertiesi}{\text{property}}{\text{properties}}
\Crefname{propertiesi}{\text{Property}}{\text{Properties}}
\newlist{points}{enumerate}{10}
\setlist[points]{label*=\arabic*)}
\crefname{pointsi}{\text{point}}{\text{points}}
\Crefname{pointsi}{\text{Point}}{\text{Points}}


\newcommand{\ubar}[1]{\underaccent{\bar}{#1}}

\numberwithin{equation}{section}
\newtheorem{thm}{Theorem}[]
\theoremstyle{definition}

\newtheorem{corollary}{Corollary}[section]
\newtheorem{deff}{Definition}[section]
\newtheorem*{crit*}{Criterion}

\renewcommand{\iff}{\Longleftrightarrow}

\newcommand{\lied}{\pounds}
\newcommand{\defeq}{\vcentcolon=}

\newcommand{\scri}{\mathscr{J}}


\newcommand{\pt}[2]{\tensor{\hat{#1}}{#2}}

\newcommand{\ctr}[3]{\tensor[{#1}]{#2}{#3}}

\newcommand{\ct}[2]{\tensor{{#1}}{#2}}

\newcommand{\ctl}[2]{\tensor[^{{\,}^{\scriptscriptstyle\!\ell\!}\!}]{{#1}}{#2}}
\newcommand{\ctlcn}[2]{\tensor[^{{\,}^{\scriptscriptstyle\!\ell\!}\!}]{{\ubar{#1}}}{#2}}
\newcommand{\ctn}[2]{\tensor[^{{\,}^{\scriptscriptstyle\!N\!}\!}]{{#1}}{#2}}
\newcommand{\ctncn}[2]{\tensor[^{{\,}^{\scriptscriptstyle\!N\!}\!}]{{\ubar{#1}}}{#2}}
\newcommand{\ctkcn}[2]{\tensor[^{{\,}^{\scriptscriptstyle\!k\!}\!}]{{\ubar{#1}}}{#2}}

\newcommand{\ctt}[2]{\tensor{\grave{#1}}{#2}}
\newcommand{\cttcn}[2]{\tensor{\grave{\ubar{#1}}}{#2}}
\newcommand{\cts}[2]{\tensor{\overline{#1}}{#2}}


\newcommand{\cscn}[1]{\ubar{{#1}}}
\newcommand{\ctcn}[2]{\tensor{\ubar{{#1}}}{#2}}
\newcommand{\ctcnr}[3]{\tensor[{#1}]{\ubar{#2}}{#3}}

\newcommand{\cdcn}[1]{\tensor{\ubar{\D}}{#1}}

\newcommand{\ctcng}[2]{\tensor{\tilde{\ubar{#1}}}{#2}}
\newcommand{\ctg}[2]{\tensor{\tilde{#1}}{#2}}

\newcommand{\cta}[2]{\tensor{{#1}}{^\prime#2}}
\newcommand{\ctsa}[2]{\tensor{{\overline{#1}}}{^\prime#2}}
\newcommand{\ctcna}[2]{\tensor{{\ubar{#1}}}{^\prime#2}}



\newcommand{\cs}[1]{#1}

\newcommand{\ps}[1]{\hat{#1}}


\newcommand{\df}[1]{\text{d}#1}

\newcommand{\prn}[1]{\left(#1\right)}

\newcommand{\brkt}[1]{\left[#1\right]}

\newcommand{\cbrkt}[1]{\left\lbrace#1\right\rbrace}

\newcommand{\evalat}[1]{\Big|_{#1}}
\newcommand{\eqs}{\stackrel{\scri}{=}}
\newcommand{\eqn}{\stackrel{\N}{=}}



\newcommand{\cd}[1]{\tensor{\nabla}{#1}}

\newcommand{\cds}[1]{\tensor{\overline{\nabla}}{#1}}
\newcommand{\cdc}[1]{\tensor{\D}{#1}}
\newcommand{\pd}[1]{\tensor{\hat{\nabla}}{#1}}



\newcommand{\dpart}[1]{\partial_{#1}}

\newcommand{\spacef}{\ }


\newcommand{\D}{\mathcal{D}}

\newcommand{\Sc}{\mathcal{S}}

\newcommand{\N}{\mathcal{N}}

\newcommand{\ms}[1]{\cts{g}{#1}}

\newcommand{\mc}[1]{\ct{q}{#1}}

\newcommand{\mcn}[1]{\ct{\ubar{q}}{#1}}






  \setlength{\cftsecnumwidth}{3em}
  \setlength{\cftsubsecnumwidth}{4em}
  \setlength{\cftsubsubsecnumwidth}{5em}
\newcolumntype{M}[1]{>{\centering\arraybackslash}m{#1}}
\newcolumntype{N}{@{}m{0pt}@{}}

\newcounter{marginnotecount}[section]





\addbibresource{radiation-horizons.bib}


	\title{\Large \textbf{News tensor on null hypersurfaces}}
	\author[]{Francisco Fernández-Álvarez\thanks{francisco.fernandez@ehu.eus}\ }
	\affil[]{Departamento de Física\\ Universidad del País Vasco UPV/EHU\\ Apartado 644, 48080 Bilbao, Spain}
	\date{\today{}}
\begin{document}
	
	\maketitle

	\begin{abstract}
	 A geometric definition of news tensor on null hypersurfaces  in four space-time dimensions is presented. When the conformal Einstein field equations hold, this news tensor yields the correct expression for the radiative components of the rescaled Weyl tensor at infinity  with vanishing cosmological constant in arbitrary conformal gauge. Also, a generalised transport equation for the Geroch tensor is derived. Important differences between null hypersurfaces in the bulk of the space-time and null infinity with vanishing cosmological constant are reviewed, and their impact on the role of the news is discussed.
	\end{abstract}

	\tableofcontents
\section{Introduction}
The theory of gravitational radiation at infinity has been developed for decades now. In full general relativity, great achievements were made in the 50s and 60s \cite{Pirani57,Bel1958,Trautman58,Bondi1962,Sachs1962,Newman62,Winicour1966}, and also in the following decades \cite{Geroch1977,Ashtekar1981c,Geroch81}. For the case of a positive cosmological constant $ \Lambda $, important open problems remain to be solved \cite{Penrose2011} and in the last 10 years advances have been made \cite{Ashtekar2014,Fernandez-AlvarezSenovilla2020b,} ---see also \cite{Szabados2019,Senovilla2022} for a review and references therein. When $ \Lambda=0 $, an object called `news' is central in the analysis of asymptotic gravitational radiation: it vanishes only in the absence of gravitational waves and is a key piece in the Bondi-Trautman energy loss formula and in the expressions of other charges and balance laws. This `news' object has different faces: in some works it is a function \cite{Bondi1962}, sometimes identified with the components of the asymptotic shear \cite{Madler2016} or the  complex scalar shear in Newman-Penrose formalism \cite{Penrose62,Newman2009}; in others, it is a gauge invariant covariant quantity \cite{Geroch1977,Ashtekar81,Fernandez-AlvarezSenovilla2022a}. In the analysis of gravitational radiation at infinity with $ \Lambda=0 $, the very special structure of the conformal boundary $ \scri $ ---in Penrose's geometric description of infinity \cite{Penrose65}--- plays a fundamental role. In the last years there has been interest in understanding the properties of null hypersurfaces \emph{in the bulk} by using ingredients similar to the ones present at infinity ---and vice versa---, such as supertranslations on horizons \cite{Hawking2016,Sousa2018}, comparisons with weakly isolated horizons \cite{Ashtekar2024a,Ashtekar2024b,Ashtekar2024c}, and also other studies like Carollian-geometry methods \cite{Riello2024} ---see also references therein.\\

The aim of this paper is twofold. Firstly, a tensorial definition of news on null hypersurfaces is put forward. For the case of $ \scri $, this definition gives the correct gauge invariant tensor and generalises  to arbitrary conformal frames at infinity the equations for the rescaled Weyl tensor in terms of news. Also, its expression by means of the different shears is given explicitly. Having expressions which are conformal-gauge independent is important, as particular choices of gauges ---such as the so called divergence-free gauge--- can be inadequate to perform numerical relativity \cite{Frauendiener2004}. In addition to that, a generalised transport equation for the so called Geroch tensor is found at $ \scri $. Secondly, a study of how this news tensor is related to the radiative components of the Weyl tensor on a null hypersurface $ \N $ in the bulk of the space-time is presented. This is a natural question to ask  if one aims to make an analogy with null infinity $ \scri $ with $ \Lambda=0 $. The result is that, in general, the news cannot determine the gravitational radiation crossing (or `outgoing') $ \N $, unless one makes further assumptions on the geometry and/or on the tangential (`incoming') radiation to $ \N $. Two particular cases are used to illustrate this: non-expanding horizons (NEH) and a special case of a null marginally trapped tube (MTT).
	\subsection{Conventions}\label{sec:conventions}
	In the rest of the paper 4 space-time dimensions are considered, with Greek letters $ \alpha, \beta... $ used for 4-dimensional abstract indices, keeping Latin ones $ a, b ... $ and $ A, B ... $ for three- and two-dimensional manifolds, respectively. The space-time metric has signature $ \prn{-,+,+,+} $ and the space-time covariant derivative is denoted $ \cd{_{\alpha}} $, defining the Riemann tensor such that $ \ct{R}{_{\alpha\beta\gamma}^{\delta}}\ct{v}{_{\delta}}=\prn{\cd{_{\alpha}}\cd{_{\beta}}-\cd{_{\beta}}\cd{_{\alpha}}}\ct{v}{_{\gamma}} $. \Cref{sec:news-null-hyp} does not make use of Einstein field equations (EFE) but in \cref{sec:conf-space} the conformal EFE (CEFE) are assumed. Also, the term `crossing radiation' on a foliated null hypersurface $ \N $ will be used to denote the part of the (rescaled) Weyl tensor containing $ \phi_{3} $ and $ \phi_{4} $ in Newman-Penrose notation with respect to a tetrad containing the null vector tangent to the generators of  $ \N $. In addition to that, `tangential radiation' will refer to the  $ \phi_{0} $ and $ \phi_{1} $ in that tetrad, choosing the other null direction  orthogonal to a given cross-section of $ \N $.
	\subsubsection*{Conformal completion}
			The physical space-times $ \prn{\ps{M},\pt{g}{_{\alpha\beta}}} $ that are considered in \cref{sec:conf-space} admit a conformal completion (unphysical space-time) $ \prn{\cs{M},\ct{g}{_{\alpha\beta}}} $ {\em \`a la} Penrose with boundary $ \scri $:
				\begin{properties}
					\item 	There exists an embedding $ \phi: \ps{M} \rightarrow \cs{M} $ such that $ \phi(\ps{M})=\cs{M}\setminus\scri $, and the physical metric is related to the conformal one as
					\begin{equation}
					\ct{g}{_{\alpha\beta}}= \Omega^2\pt{g}{_{\alpha\beta}},
					\end{equation} 
					where the pullback of the conformal metric to the physical space-time, $ \prn{\phi^*g}_{\alpha\beta} $, is referred by  $ \ct{g}{_{\alpha\beta}} $.
					\item $\Omega>0$ in $M\setminus\scri$ , $\Omega=0$ on $\scri$ and $\ct{N}{_\alpha}\defeq\cd{_\alpha}\Omega$ (the normal to $ \scri $) is non-vanishing there.\label{it:OmegaAtScri}
					\item $ \pt{g}{_{\alpha\beta}} $ is a solution of Einstein field equations with arbitrary $ \Lambda $ (unless otherwise specified).
					\item The energy-momentum tensor, $ \pt{T}{_{\alpha\beta}} $, vanishes at $ \scri $ and $ \ct{T}{_{\alpha\beta}}\defeq\Omega^{-1}\pt{T}{_{\alpha\beta}} $ is smooth there.\label{it:energytensorassumption}
				\end{properties}
			The remaining gauge freedom is to perform conformal rescalings of the conformal factor
				\begin{equation}
					\Omega \rightarrow \omega\Omega
				\end{equation}
			with $ \omega>0 $. For further details on this kind of completions see for example \cite{Kroon2016}.	
\section{News tensor on foliated null hypersurfaces}\label{sec:news-null-hyp}
	For this section, no field equations are assumed. The goal is to present a definition of news tensor for general null hypersurfaces equipped with a foliation by closed surfaces. This last requirement is natural because the news tensor is, by definition, an object associated to two-dimensional closed surfaces \cite{Geroch1977,Fernandez-AlvarezSenovilla2022a} and the concept of gravitational radiation is tightly related to quasi-local methods involving integrals over two-spheres \cite{Penrose1986,Szabados2004}. 
	\subsection{Set up of the null hypersurface equipped with a foliation}
	In this first part, the tools needed on null hypersurfaces are presented, introducing a foliation and notation that appears in the rest of the paper, and also developing the set up and obtaining key formulae according to the present interests and conventions ---for the study of general hypersurfaces, see \cite{Mars1993b}. Let $ \prn{M,\ct{g}{_{\alpha\beta}}} $  be a 4-dimensional space-time\footnote{In \cref{sec:conf-space},  the conformal compactification of a physical space-time is considered. So far, in this section, no field equations at all are assumed.} and let $ \prn{\N,\ms{_{ab}}} $ be a null hypersurface $ \N $ with degenerate first fundamental form $ \ms{_{ab}} $. Denote by $ \ct{k}{^a} $ the degeneration vector, $ \ct{k}{^b}\ms{_{ab}}=0 $ everywhere in $ \N $ ---of course, $ \ct{k}{^{a}} $ can always be rescaled. Also, assume that $ \N $ admits a foliation by two-dimensional leaves $ \Sc_{C} $ labelled by constant values $ C $ of a function $ F $ on $ \N $ satisfying
		\begin{equation}\label{eq:foliation-F}
			\dot{F}\defeq \ct{k}{^a}\dpart{a}F\neq 0
		\end{equation}
	and such that
		\begin{align}
			\Sc_{C_{1}}\cap\Sc_{C_{2}}&=\emptyset\ ,\label{eq:foliation-cap}\\
			\bigcup_{C}\Sc_{C}&=\N\ \label{eq:foliation-cup}.
		\end{align}
	Define the one-form
		\begin{equation}\label{eq:ell}
			\ct{\ell}{_{a}}\defeq -\frac{1}{\dot{F}}\dpart{a}F\ ,\quad\ct{k}{^p}\ct{\ell}{_{p}}=-1\ .
		\end{equation}
	Let $ \mathfrak{X}_{\N} $, $ \Lambda_{\N} $, $ \mathfrak{X}_{\Sc} $ and $ \Lambda_{\Sc} $ be the set of tangent vector fields and forms on $ \N $ and on the leaves of the foliation, respectively. Then
		\begin{align}
			\ct{v}{^a}\ct{\ell}{_{a}}=0 &\iff \ct{v}{^a}\in \mathfrak{X}_{\Sc}\ ,\\	\ct{w}{_a}\ct{k}{^{a}}=0 &\iff \ct{w}{_a}\in \Lambda_{\Sc}\ .
		\end{align}
	Let $ \cbrkt{\ctcn{E}{^a_{A}}} $ and $ \cbrkt{\ctcn{W}{_{a}^A}} $ be sets of 2 linearly-independent vector fields and forms, constituting a pair of orthonormal bases for $ \mathfrak{X}_{\Sc} $ and $ \Lambda_{\Sc} $. Observe that $ \cbrkt{-\ct{\ell}{_{a}},\ctcn{W}{_{a}^A}} $ and $ \cbrkt{\ct{k}{^a},\ctcn{E}{^a_{A}}} $ are bases for $ \Lambda_{\N} $ and $ \mathfrak{X}_{\N} $, respectively. Then, the projector to the leaves can be defined as
		\begin{equation}\label{eq:projector-leaves}
			\ctcn{P}{^a_{b}}\defeq\ctcn{E}{^a_{D}}\ctcn{W}{_{b}^D}=\delta^a_{b}+\ct{k}{^a}\ct{\ell}{_{b}}\ ,\quad\ctcn{P}{^a_{b}}\ct{\ell}{_{a}}=0=\ctcn{P}{^a_{b}}\ct{k}{^b}\ ,\quad\ctcn{P}{^a_{a}}=2\ .
		\end{equation}
	The notation $ \ctcn{v}{_{a}}\defeq\ctcn{P}{^b_{a}}\ct{v}{_{b}}$ for $ \ct{v}{_{a}}\in \Lambda_{\N} $ will appear frequently. More generally, an underbar will denote that a tensor $ \ctcn{A}{_{abc...}^{efg...}} $ is tangent to the leaves (orthogonal to $ \ct{\ell}{_{a}} $ and $ \ct{k}{^b} $ in all its indices). Observe that $ \ctcn{E}{^a_{A}} $ and $ \ctcn{W}{_{a}^A} $ are vector fields on $ \N $, i.e., they are defined for every leaf of the foliation and depend on $ F $. Thus, in general, given $ \ctcn{A}{_{a }^{b}} $ one can write
		\begin{equation}
			 \ctcn{A}{_{a}^{b}} =\ctcn{E}{^b_{B}}\ctcn{W}{_{a}^A}\ctcn{A}{_{A}^B}\ ,
		\end{equation}
	but the quantity $ \ctcn{A}{_{A}^B} $ depends on $ F $. For $ F=C $, $ \ctcn{A}{_{A}^B} $ is a tensor field on the leaf $ \Sc_{C} $. It is easy to show that
		\begin{equation}\label{eq:lie-ell-k}
			\lied_{\vec{k}}\ct{\ell}{_{b}}=-\ctcn{P}{^a_{b}}\dpart{a}\ln \dot{F}\ .
		\end{equation}
	Also, introduce two-dimensional coordinates $ \zeta^A$, and choose $ \ctcn{W}{_{a}^ A}=\prn{\df{\zeta}^A}_{a} $. This is the choice used from now on. With it, one has
		\begin{align}
			\lied_{\vec{k}}\ctcn{E}{^a_{A}}&=-\ct{k}{^a}\ctcn{E}{^m_{A}}\dpart{m}\ln \dot{F}\ ,\label{eq:liedE}\\
			\lied_{\vec{k}}\ctcn{W}{_{a}^A}&=0\ .
		\end{align}
	 Note that using \cref{eq:liedE}
			\begin{equation}
				\ctcn{E}{^{a}_{A}}\ctcn{E}{^{b}_{B}}\lied_{\vec{N}}\ctcn{A}{_{ab}}=\lied_{\vec{N}}\ctcn{A}{_{AB}}\ .
			\end{equation}
	In addition to this, define an object $ \ms{^{ab}} $ as the \emph{unique} tensor field satisfying
		\begin{equation}
			\ms{_{da}}\ms{^{dc}}\ms{_{cb}}=\ms{_{ab}}\ ,\quad\ct{\ell}{_{a}}\ms{^{ab}}=0\ .S
		\end{equation}
	Observe that the second of these equations fixes the ambiguity $ \ms{^{dc}}+\ct{v}{^{(d}}\ct{k}{^{c)}}$ that generates a new tensor fulfilling the first condition. Then, $ \ctcn{g}{_{ab}} $ and $ \ctcn{g}{^{ab}} $ constitute essentially a two dimensional metric and its inverse --- $ \ms{_{ab}}\ms{^{ab}}=2 $ ---, and it is possible to rise and lower indices of tensors orthogonal both to $ \ct{\ell}{_{a}} $ and $ \ct{k}{^a} $. Nevertheless, special care has to be put when contracting an arbitrary $ \ct{v}{_{a}}\in\Lambda_{\N} $, because in general
		\begin{equation}
			\ms{_{da}}\ms{^{dc}}\ct{v}{_{c}}=\ctcn{v}{_{a}}\neq\ct{v}{_{a}}\ ,
		\end{equation}
	indeed,
		\begin{equation}
			\ms{^{ad}}\ms{_{bd}}=\ctcn{P}{^a_{b}}\ .
		\end{equation}
	It is also convenient to introduce
		\begin{align}
			\mcn{_{AB}}&\defeq\ctcn{E}{^a_{A}}\ctcn{E}{^b_{B}}\ms{_{ab}}\ ,\\
			\mcn{^{AB}}&\defeq\ctcn{W}{_a^{A}}\ctcn{W}{_b^{B}}\ms{^{ab}}\ .
		\end{align}
	One can think of $ \mcn{_{AB}} $ as a one-parameter family of two-dimensional metrics. Then, when $ F $ is restricted to a constant $ F=C $, $ \mcn{_{AB}} $ is the Riemannian metric on the two-dimensional leaf $ \Sc_{C} $. \\
	
	Next, from the viewpoint of the space-time $ \prn{M,\ct{g}{_{\alpha\beta}}} $, let $ \ct{k}{_{\alpha}} $ be normal to $ \N $. Consider a set of 3 linearly-independent, orthonormal vector fields $ \cbrkt{\ct{e}{^\alpha_{a}}} $ orthogonal to $ \ct{k}{_{\alpha}} $ defined on $ \N $. The degenerate first fundamental form is related to the space-time metric $ \ct{g}{_{\alpha\beta}} $ via
		\begin{equation}
			\ms{_{ab}}=\ct{e}{^\alpha_{a}}\ct{e}{^\beta_{b}}\ct{g}{_{\alpha\beta}}.
		\end{equation}
	Also, any vector field defined at least on $ \N $ obeying $ \ct{v}{^\alpha}\ct{k}{_{\alpha}}=0 $ is determined there by a vector field $ \ct{v}{^a}\in\mathfrak{X}_{\N} $,
		\begin{equation}
			\ct{v}{^\alpha}= \ct{v}{^a}\ct{e}{^\alpha_{a}}\ , \text{ where}\ \quad\ct{v}{^\alpha}\ct{k}{_{\alpha}}=0.
		\end{equation}
	Notice that because $ \N $	is a null hypersurface, one has
		\begin{equation}
			\ct{k}{^\alpha}\ct{k}{_{\alpha}}=0\ ,
		\end{equation}
	and it is possible to write
		\begin{equation}
			\ct{k}{^\alpha}=\ct{e}{^\alpha_{a}}\ct{k}{^a}\ .
		\end{equation}
	To decompose the tangent plane at $ \N $, one introduces a \emph{rigging vector} $ \ct{\ell}{^\alpha} $. Here, since one is considering the lightlike hypersurface $ \N $ to be equipped with a foliation, the most convenient choice in this work is to adapt the rigging to be the unique lightlike vector field (other than $ \ct{k}{^\alpha} $) that is orthogonal to the leaves, i.e.,
		\begin{equation}
			\ct{e}{^\alpha_{a}}\ct{\ell}{_{\alpha}}=\ct{\ell}{_{a}}\ ,\quad \ct{\ell}{_{\alpha}}\ct{\ell}{^\alpha}=0\ ,\quad\ct{\ell}{^\alpha}\ct{k}{_{\alpha}}=-1\ ,\quad\ct{\ell}{_{\alpha}}\ctcn{E}{^\alpha
			_{A}}=0\ ,
		\end{equation}
	where $ \ctcn{E}{^\alpha_{A}}=\ct{e}{^\alpha_{a}}\ctcn{E}{^{a}_{A}} $ and $ \ct{\ell}{_{a}} $ was defined in \cref{eq:ell}. Then, write a basis and its dual at points of $ M $ at $ \N $ as $ \cbrkt{\ct{\ell}{^\alpha},\ct{e}{^\alpha_{a}}} $ and $ \cbrkt{-\ct{k}{_{\alpha}},\ct{\omega}{_{\alpha}^{a}}} $, with $ \cbrkt{\ct{\omega}{_{\alpha}^a}}=\cbrkt{-\ct{\ell}{_{\alpha}},\ctcn{W}{_{\alpha}^A}} $  and $ \cbrkt{\ct{e}{^\alpha_{a}}}=\cbrkt{\ct{k}{^\alpha},\ct{e}{^\alpha_{a}}} $, satisfying
		\begin{equation}
		\ctcn{W}{_{\alpha}^A}\ctcn{E}{^\alpha_{B}}=\delta^A_{B}\ ,\quad \ct{e}{^\alpha_{a}}\ct{\omega}{_{\alpha}^b}=\delta^b_{a}\ ,\quad \ct{k}{_{\alpha}}\ct{e}{^\alpha_{a}}=0=\ct{\omega}{_{\alpha}^a}\ct{\ell}{^\alpha}\ .
		\end{equation}
	The projector to $ \N $ can be defined as
		\begin{equation}
			\ct{P}{^\alpha_{\beta}}\defeq \ct{e}{^\alpha_{m}}\ct{\omega}{_{\beta}^m}=\delta^\alpha_{\beta}+\ct{k}{_{\beta}}\ct{\ell}{^{\alpha}}\ .
		\end{equation}
	Observe also that $ \ctcn{W}{_{\alpha}^A}=\ct{\omega}{_{\alpha}^m}\ctcn{W}{_{m}^A}$ and one can write the space-time version of the projector to the leaves of the foliation
		\begin{equation}
			\ctcn{P}{^\alpha_{\beta}}=\ctcn{E}{^\alpha_{M}}\ctcn{W}{_{\beta}^M}=\delta^\alpha_{\beta}+\ct{k}{^{\alpha}}\ct{\ell}{_{\beta}}+\ct{\ell}{^\alpha}\ct{k}{_{\beta}}\ .
		\end{equation}
	The second fundamental form of $ \N $ reads
		\begin{equation}
			\ct{K}{_{ab}}=\ct{e}{^\alpha_{a}}\ct{e}{^\beta_{b}}\cd{_\alpha}\ct{k}{_{\beta}}=\frac{1}{2}\lied_{\vec{k}}\ms{_{ab}}\ ,
		\end{equation}
	where $ \cd{_{\alpha}} $ is the space-time Levi-Civita connection. Also, the following three-dimensional volume form will be used,
		\begin{equation}
			-\ct{k}{_\alpha}\ct{\epsilon}{_{abc}}=\ct{\eta}{_{\alpha\mu\nu\sigma}}\ct{e}{^\mu_{a}}\ct{e}{^\nu_{b}}\ct{e}{^\sigma_{c}}\ ,
		\end{equation}
	where $ \ct{\eta}{_{\alpha\beta\gamma\delta}} $ is the space-time volume form. The contravariant version is determined by $ \ct{\epsilon}{^{abc}}\ct{\epsilon}{_{abc}}=6 $. For the leaves of the foliation, one can introduce
		\begin{align}
		-\ct{\ell}{_{a}}\ctcn{\epsilon}{_{AB}}=\ct{\epsilon}{_{amn}}\ctcn{E}{^m_{A}}\ctcn{E}{^n_{B}}\ ,\\
		\ct{k}{^a}\ctcn{\epsilon}{^{AB}}=\ct{\epsilon}{^{amn}}\ctcn{W}{_{m}^A}\ctcn{W}{_{n}^B}\ ,
		\end{align}
	with the orientation fixed to $\ctcn{\epsilon}{_{23}}= \ct{\epsilon}{_{123}}=\ct{\eta}{_{0123}}=1 $.  \\
	
	The next step is to introduce a connection on $ \N $, and for that it is convenient to define some `kinematic' quantities,
		\begin{align}
		\ct{H}{_{ab}}\defeq \ct{e}{^\alpha_{a}}\ct{e}{^\beta_{b}}\cd{_{\alpha}}\ct{\ell}{_{\beta}}\ ,\quad \psi^a_{b}\defeq\ct{\omega}{^a_{\mu}}\ct{e}{^\nu_{b}}\cd{_{\nu}}\ct{\ell}{^\mu}\ ,\quad \ct{\varphi}{_{a}}\defeq \ct{k}{^m}\ct{H}{_{am}}\ ,
		\end{align}
	together with the two-dimensional divergence tensor, the expansion and the  shear of $ \ct{k}{_{\alpha}} $,
		\begin{align}
		 	\ctcn{\kappa}{_{AB}}&\defeq\ctcn{E}{^\alpha_{A}}\ctcn{E}{^\beta_{B}} \cd{_{\alpha}}\ct{k}{_{\beta}}\ ,\\
		 	\ctcn{\kappa}{}&\defeq \mcn{^{AB}}\ctcn{\kappa}{_{AB}}\ ,\\
		 	\ctcn{\nu}{_{AB}}&\defeq\ctcn{\kappa}{_{AB}}-\frac{1}{2}\mcn{_{AB}}\ctcn{\kappa}{}\ ,
		\end{align}
	and of $ \ct{\ell}{_{\alpha}} $,
		\begin{align}
			\ctcn{\theta}{_{AB}}&\defeq\ctcn{E}{^\alpha_{A}}\ctcn{E}{^\beta_{B}} \cd{_{\alpha}}\ct{\ell}{_{\beta}}\ ,\\
		 	\ctcn{\theta}{}&\defeq \mcn{^{AB}}\ctcn{\theta}{_{AB}}\ ,\\
		 	\ctcn{\sigma}{_{AB}}&\defeq\ctcn{\theta}{_{AB}}-\frac{1}{2}\mcn{_{AB}}\ctcn{\theta}{}\ .
		\end{align}
	Observe that by definition $ \ct{k}{^\alpha} $ is geodesic, as it is hypersurface-orthogonal,
		\begin{equation}\label{eq:k-geodesic-nu}
			\ct{k}{^{\mu}}\cd{_{\mu}}\ct{k}{^\alpha}=\nu \ct{k}{^\alpha}\ ,
		\end{equation}
	for some  function $ \nu $.	Also one has
		\begin{align}
			\lied_{\vec{k}}\mcn{_{AB}}&=2\ctcn{\kappa}{_{AB}}\ ,\\
			\lied_{\vec{k}}\mcn{^{AB}}&=-2\ctcn{\kappa}{^{AB}}\ .\label{eq:lied-invq}
		\end{align}
	Observe that the two-dimensional vorticities of both vector fields vanish by definition, however $ \ct{H}{_{[ab]}}\neq 0 $ in general. Using \cref{eq:lie-ell-k},
		\begin{align}\label{eq:H-dotF-Phi}
			\ctcn{E}{^m_{A}}\ct{k}{^a}\ct{H}{_{am}}=\ctcn{\varphi}{_{A}}-\ctcn{E}{^m_{A}}\dpart{_{m}}\ln\dot{F}\ .
		\end{align}
	Now, introduce an induced connection as 
		\begin{equation}
			\ct{v}{^a}\cds{_{a}}\ct{w}{^b}\defeq \ct{\omega}{_{\beta}^b}\ct{v}{^\alpha}\cd{_{\alpha}}\ct{w}{^\beta}\ ,\quad \forall\quad \ct{v}{^\alpha}=\ct{e}{^\alpha_{a}}\ct{v}{^a}\ ,\quad\ct{w}{^\alpha}=\ct{e}{^\alpha_{a}}\ct{w}{^a}\ .
		\end{equation}
	This is a torsion-free connection, not metric in general,  and it depends on the rigging vector, which has been fixed to the $ \ct{\ell}{_{a}} $ of \cref{eq:ell}. For any $ \ct{T}{_{\alpha}^\beta} $ defined at least on $ \N $,
		\begin{equation}
			\ct{e}{^\alpha_{a}}\ct{e}{^\beta_{b}}\ct{\omega}{_{\gamma}^c}\cd{_{\alpha}}\ct{T}{_{\beta}^\gamma}=\cds{_{a}}\cts{T}{_{b}^c}-\ct{T}{_{\mu}^c}\ct{\ell}{^\mu}\ct{K}{_{ab}}-\ct{T}{_{b}^\mu}\ct{k}{_{\mu}}\ct{\psi}{_{a}^c}\ ,
		\end{equation}
	where the notation $ \cts{T}{_{a}^b}\defeq\ct{e}{^\alpha_{a}}\ct{\omega}{_{\beta}^b} \ct{T}{_{\alpha}^\beta}$ has been introduced. In particular, for the induced metric $ \ms{_{ab}}$ one has
		\begin{equation}
			\cds{_{a}}\ms{_{bc}}=\ct{\ell}{_{b}}\ct{K}{_{ac}}+\ct{\ell}{_{c}}\ct{K}{_{ab}}\ .
		\end{equation}
	The connection is neither volume preserving, in general, and one has that $ \cds{_{a}}\ct{\epsilon}{_{bcd}}=0 \iff \ct{\varphi}{_{a}}=0$. With the connection coefficients
		\begin{equation}
			\cts{\Gamma}{^a_{bc}}=\ct{\omega}{_{\alpha}^a}\ct{e}{^{\beta}_{b}}\cd{_{\beta}}\ct{e}{^\alpha_{c}}\ ,
		\end{equation}
	one computes the associated curvature tensor $ \cts{R}{_{abc}^d} $ that satisfies the Ricci identity
		\begin{equation}
			\cts{R}{_{abc}^d}\ct{v}{_{d}}=\prn{\cds{_{a}}\cds{_{b}}-\cds{_{b}}\cds{_{a}}}\ct{v}{_{c}}\ ,
		\end{equation}
	and 
		\begin{align}
			\cts{R}{_{abc}^d}+\cts{R}{_{bac}^d}&=0\ ,\\
			\cts{R}{_{abc}^d}+\cts{R}{_{bca}^d}+\cts{R}{_{cab}^d}&=0\ ,\label{eq:permut-riemann}\\
			\cds{_{[e}}\cts{R}{_{ab]c}^d}&=0\ .\label{eq:ricci-riemann-N}
		\end{align}
	The Gauss equation reads
		\begin{equation}\label{eq:gauss-N}
			\ct{e}{^\alpha_{a}}\ct{e}{^\beta_{b}}\ct{e}{^\gamma_{c}}\ct{\omega}{_{\delta}^d}\ct{R}{_{\alpha\beta\gamma}^\delta}= \cts{R}{_{abc}^d}+\ct{K}{_{ac}}\ct{\psi}{_{b}^d}-\ct{K}{_{bc}}\ct{\psi}{_{a}^d}\ .
		\end{equation}
	Also,
		\begin{align}
			\ct{e}{^\alpha_{a}}\ct{e}{^\beta_{b}}\ct{R}{_{\alpha\beta\gamma}^\delta}\ct{\ell}{^\gamma}\ct{\omega}{_{\delta}^d}&=\cds{_{b}}\ct{\psi}{_{a}^d}-\cds{_{a}}\ct{\psi}{_{b}^d}-\ct{\varphi}{_{a}}\ct{\psi}{_{b}^d}+\ct{\varphi}{_{b}}\ct{\psi}{_{a}^d}\ ,\\
			\ct{e}{^{\alpha}_{a}}\ct{e}{^{\beta}_{b}}\ct{e}{^{\gamma}_{c}}\ct{k}{_{\delta}}\ct{R}{_{\alpha\beta\gamma}^{\delta}}&=\cds{_{[a}}\ct{K}{_{b]c}}\ ,\label{eq:codazzi-NKappa}\\
			\ct{e}{^\alpha_{a}}\ct{e}{^\gamma_{c}}\ct{\ell}{^\beta}\ct{k}{_{\delta}}\ct{R}{_{\alpha\beta\gamma}^\delta}&=2\ct{k}{^d}\cds{_{[c}}\ct{H}{_{d]a}}\ ,\\
			\ct{e}{^\alpha_{a}}\ct{e}{^\beta_{b}}\ct{R}{_{\alpha\beta\gamma}^\delta}\ct{\ell}{^\gamma}\ct{k}{_{\delta}}&=2\ct{k}{^c}\cds{_{[b}}\ct{H}{_{a]c}}=2\cds{_{[b}}\ct{\varphi}{_{a]}}+2\ct{\psi}{_{[b}^d}\ct{K}{_{a]d}}\ .
		\end{align}
	Defining the natural contraction
		\begin{equation}
			\cts{R}{_{ab}}\defeq \cts{R}{_{acb}^c}\ ,
		\end{equation}
	one has the relation
		\begin{equation}\label{eq:codazzi-N}
			\ct{e}{^\alpha_{a}}\ct{e}{^\beta_{b}}\ct{R}{_{\alpha\beta}}=\cts{R}{_{ab}}+2\ct{K}{_{b[a}}\ct{\psi}{_{c]}^c}-2\ct{k}{^d}\cds{_{[b}}\ct{H}{_{d]a}}\ .
		\end{equation}
	The last term in this expression can be expanded as
		\begin{equation}\label{eq:combination-H}
			2\ct{k}{^{d}}\cds{_{[b}}\ct{H}{_{d]a}}=-\ct{k}{^{d}}\ms{_{ma}}\prn{2\cds{_{[d}}\ct{\psi}{_{b]}^{m}}-\ct{\psi_{[d}}{^{m}}\ct{\varphi}{_{b]}}}+\ct{\ell}{_{a}}\ct{k}{^{d}}\brkt{2\cds{_{[d}}\ct{\varphi}{_{b]}}+\ct{K}{_{bm}}\ct{\psi}{_{d}^{m}}}\ .
		\end{equation}
	It is important to observe that, although the right hand side of \cref{eq:codazzi-N} is symmetric, in general the tensor $ \cts{R}{_{ab}} $ is not. Also,
		\begin{equation}
			\cts{R}{_{abd}^d}=2\ct{k}{^c}\cds{_{[b}}\ct{H}{_{a]c}}+2\ct{K}{_{d[b}}\ct{\psi}{_{a]}^d}=2\cds{_{[b}}\ct{\varphi}{_{a]}}\ 
		\end{equation}
	is different from zero too, in general. Indeed, using \cref{eq:permut-riemann},
		\begin{equation}\label{eq:antisym-Rab}
			\cts{R}{_{[ab]}}=\cds{_{[b}}\ct{\varphi}{_{a]}}\ .
		\end{equation}
	With the symmetric part of $ \cts{R}{_{ab}} $, define the the scalar
		\begin{equation}
			\cts{R}{}\defeq\ms{^{ab}}\cts{R}{_{ab}}\ .
		\end{equation}

	Apart from the induced connection on $ \N $, one can    define  a (one-parameter family of) two-dimensional connection(s)
		\begin{equation}
			\ctcn{\gamma}{^A_{BC}}=\ctcn{W}{_{a}^A}\ctcn{E}{^{b}_B}\cds{_{b}}\ctcn{E}{^a_{C}}\ ,
		\end{equation}
	yielding the following covariant derivative
		\begin{equation}
			\ct{v}{^A}\cdcn{_{A}}\ct{w}{^B}\defeq \ct{W}{_{b}^B}\ct{v}{^a}\cd{_{a}}\ct{w}{^b}\ ,\quad \forall\quad \ct{v}{^a}=\ct{E}{^a_{A}}\ct{v}{^A}\ ,\quad\ct{w}{^a}=\ct{E}{^a_{A}}\ct{w}{^A}\ \ ,
		\end{equation}
	that satisfies 
		\begin{align}
			\ctcn{E}{^\alpha_{A}}\ctcn{E}{^\gamma_{C}}\ctcn{W}{_{\beta}^{B}}\cd{_{\alpha}}\ct{T}{^{\beta}_{\gamma}}=&\cdcn{_{A}}\ctcn{T}{^B_{C}}-\ct{T}{^{\beta}_{\gamma}}\ct{k}{_{\beta}}\ct{E}{^{\gamma}_{C}}\ctcn{\theta}{_{A}^{B}}-\ctcn{W}{_{\beta}^{B}}\ct{T}{^{\beta}_{\gamma}}\ct{\ell}{^\gamma}\ctcn{\kappa}{_{AC}}\nonumber\\
			&-\ct{T}{^{\beta}_{\gamma}}\ct{\ell}{_{\beta}}\ctcn{E}{^{\gamma}_{C}}\ctcn{\kappa}{_{A}^{B}}-\ctcn{W}{_{\beta}^{B}}\ct{T}{^{\beta}_{\gamma}}\ct{k}{^{\gamma}}\ctcn{\theta}{_{AC}}\ ,\\
			\ctcn{E}{^a_{A}}\ctcn{E}{^{c}_{C}}\ctcn{W}{_{b}^{B}}\cds{_{a}}\ct{T}{^{b}_{c}}=&\cdcn{_{A}}\ctcn{T}{^{B}_{C}}-\ct{T}{^{b}_{c}}\ct{\ell}{_{b}}\ctcn{E}{^{c}_{C}}\ctcn{\kappa}{_{A}^{B}}-\ctcn{W}{_{b}^{B}}\ct{T}{^{b}_{c}}\ct{k}{^{c}}\ctcn{\theta}{_{AC}}\ .
		\end{align}
	To derive these expressions one uses the relation
		\begin{align}
			\ms{^{mb}}\ct{e}{^{\mu}_{m}}\ct{g}{_{\mu\alpha}}&=\ct{\omega}{_{\alpha}^{b}}+\ct{\ell}{_{\alpha}}\ct{k}{^{b}}\ .
		\end{align}
	This connection is metric, in the sense that
		\begin{equation}
			\cdcn{_{A}}\mcn{_{BC}}=0\ ,
		\end{equation}
	and it is also torsion-free. Then, when it is particularised to a single leaf, it gives the Levi-Civita connection of the two-dimensional Riemannian manifold $ \prn{\Sc_{C},\mc{_{AB}}} $ with $ \mc{_{AB}}\defeq \mcn{_{AB}}\evalat{{F=C}} $. The corresponding curvature tensor satisfies
		\begin{align}
			\ctcn{R}{_{ABC}^{D}}\ct{v}{_{D}}&=\prn{\cdcn{_{A}}\cdcn{_{B}}-\cdcn{_{B}}\cdcn{_{A}}}\ct{v}{_{C}}\ ,\\
				\ctcn{R}{_{AC}}&\defeq \ctcn{R}{_{ADC}^D}=\ctcn{R}{_{CA}}\ .
		\end{align}
	And because these tensors are 2-dimensional
		\begin{align}
			\ctcn{R}{_{ABC}^{D}}&=\ctcn{K}{}\prn{\mcn{_{AC}}\delta_{B}^D-\delta^D_{A}\mcn{_{BC}}}\ ,\\
			\ctcn{R}{_{AB}}&=\ctcn{K}{}\mcn{_{AB}}\ ,\\
			\ctcn{R}{}&\defeq \mcn{^{AB}}\ctcn{R}{_{AB}}=2\ctcn{K}{}\label{eq:gaussian-curvature}\ ,
		\end{align}
	where $ \ctcn{K}{} $ coincides with the Gaussian curvature on each of the leaves of the foliation. The Gauss equation in this case reads
		\begin{equation}\label{eq:gauss-Foliation}
			\ctcn{E}{^{a}_{A}}\ctcn{E}{^{b}_{B}}\ctcn{E}{^{c}_{C}}\cts{R}{_{abc}^d}\ctcn{W}{_{d}^{D}}=\ctcn{R}{_{ABC}^{D}}+2\ctcn{\theta}{_{C[A}}\ctcn{\kappa}{_{B]}^{D}}\ .
		\end{equation}	
	Contracting indices $ D $ and $ B $,
		\begin{equation}
			\cts{R}{_{abc}^{d}}\ctcn{P}{^{b}_{d}}\ctcn{E}{^{a}_{A}}\ctcn{E}{^{c}_{C}}=\ctcn{R}{_{AC}}+2\ctcn{\theta}{_{C[A}}\ctcn{\kappa}{_{B]}^{B}}\ .
		\end{equation}	
	After working out the left-hand side, using \cref{eq:projector-leaves}, one arrives at
		\begin{align}
			\cts{R}{_{ac}}\ctcn{E}{^{a}_{A}}\ctcn{E}{^{c}_{C}}=&\ctcn{R}{_{AC}}+\ctcn{\theta}{_{CA}}\ctcn{\kappa}{}-2\ctcn{\theta}{_{D(A}}\ctcn{\kappa}{_{C)}^{D}}+\ct{k}{^{b}}\ct{\varphi}{_{b}}\ctcn{\theta}{_{AC}}+\lied_{\vec{k}}\ctcn{\theta}{_{AC}}\nonumber\\
			&-\cdcn{_{A}}\ctcn{\varphi}{_{C}}-\ctcn{\varphi}{_{A}}\ctcn{\varphi}{_{C}}+2\ctcn{\varphi}{_{(A}}\cdcn{_{C)}}\ln\dot{F}-\cdcn{_{C}}\prn{\ln \dot{F}}\cdcn{_{A}}\prn{\ln \dot{F}}\nonumber\\
			&+\cdcn{_{A}}\cdcn{_{C}}\ln \dot{F}\ ,
		\end{align}
	where \cref{eq:H-dotF-Phi} was used too. Notice that, using $ \ctcn{R}{_{AC}}=\ctcn{R}{_{CA}} $, one shows that $ \cts{R}{_{ac}}\ctcn{E}{^a_{[A}}\ctcn{E}{^{c}_{C]}}=-\cdcn{_{[A}}\ctcn{\varphi}{_{C]}} $, as it has to be ---recall \cref{eq:antisym-Rab}. One can decompose the Lie derivative of $ \theta_{AB} $ to get another version of the equation containing explicitly the evolution along $ \ct{k}{^b} $ of the shear $ \ctcn{\sigma}{_{AB}} $ ,
		\begin{align}
				\cts{R}{_{ac}}\ctcn{E}{^{a}_{A}}\ctcn{E}{^{c}_{C}}=&\ctcn{R}{_{AC}}+\ctcn{\theta}{_{CA}}\ctcn{\kappa}{}+\theta\ctcn{\kappa}{_{AC}}-2\ctcn{\theta}{_{D(A}}\ctcn{\kappa}{_{C)}^{D}}+\frac{1}{2}\lied_{\vec{k}}\ctcn{\theta}{}\mcn{_{AB}}+\lied_{\vec{k}}\ctcn{\sigma}{_{AC}}\nonumber\\
				&+\ct{k}{^{b}}\ct{\varphi}{_{b}}\ctcn{\theta}{_{AC}}-\cdcn{_{A}}\ctcn{\varphi}{_{C}}-\ctcn{\varphi}{_{A}}\ctcn{\varphi}{_{C}}+2\ctcn{\varphi}{_{(A}}\cdcn{_{C)}}\ln\dot{F}-\cdcn{_{C}}\prn{\ln \dot{F}}\cdcn{_{A}}\prn{\ln \dot{F}}\nonumber\\
				&+\cdcn{_{A}}\cdcn{_{C}}\ln \dot{F}\ .
			\end{align}
	Taking the trace, on gets
		\begin{align}
			\cts{R}{}=&\ctcn{R}{}+\ctcn{\theta}{}\ctcn{\kappa}{}+\lied_{\vec{k}}\ctcn{\theta}{}+\ctcn{\theta}{}\ct{k}{^b}\ct{\varphi}{_{b}}-\cdcn{_{A}}\ctcn{\varphi}{^{A}}-\ctcn{\varphi}{_{C}}\ctcn{\varphi}{^{C}}+2\ctcn{\varphi}{^{C}}\cdcn{_{C}}\ln \dot{F}\nonumber\\
			&-\cdcn{_{C}} \prn{\ln\dot{F}}\cdcn{^{C}}\prn{\ln\dot{F}}+\cdcn{_{C}}\cdcn{^{C}}\ln \dot{F}\ .
		\end{align}
	From this, one can give the traceless combination
		\begin{align}\label{eq:traceless-R}
			\ctcn{E}{^{a}_{A}}\ctcn{E}{^{c}_{C}}\cts{R}{_{ab}}-\frac{1}{2}\mcn{_{AC}}	\cts{R}{}=&\lied_{\vec{k}}\ctcn{\sigma}{_{AB}}-2\ctcn{\sigma}{_{D(A}}\ctcn{\nu}{_{C)}^{D}}+\ct{k}{^m}\ct{\varphi}{_{m}}\ctcn{\sigma}{_{AC}}+\ctcn{F}{_{AB}}\nonumber\\
			&+2\ctcn{\varphi}{_{(A}}\cdcn{_{C)}}\ln\dot{F}-\cdcn{_{A}}\ctcn{\varphi}{_{C}}-\ctcn{\varphi}{_{A}}\ctcn{\varphi}{_{C}}\nonumber\\
			&-\frac{1}{2}\mcn{_{AC}}\prn{2\ctcn{\varphi}{^M}\cdcn{_M}\ln\dot{F}-\cdcn{_{M}}\ctcn{\varphi}{^{M}}-\ctcn{\varphi}{_{M}}\ctcn{\varphi}{^{M}}}\ .
		\end{align}
	where the traceless, symmetric tensor
		\begin{equation}\label{eq:combination-F}
			\ctcn{F}{_{AC}}\defeq \cdcn{_{A}}\cdcn{_{C}}\ln\dot{F}-\cdcn{_{A}}\prn{\ln\dot{F}}\cdcn{_{C}}\prn{\ln\dot{F}}-\frac{1}{2}\mcn{_{AC}}\brkt{\cdcn{_{M}}\cdcn{^{M}}\ln\dot{F}-\cdcn{_{M}}\prn{\ln\dot{F}}\cdcn{^{M}}\prn{\ln\dot{F}}}
		\end{equation}
	has been introduced. Observe that \cref{eq:traceless-R} only contains the symmetric part of $ \ctcn{E}{^{a}_{A}}\ctcn{E}{^{c}_{C}}\cts{R}{_{ac}} $ ---recall \cref{eq:antisym-Rab}.	Also, taking traces and using the formulae for the projectors one gets the relation
	\begin{equation}
		2\ct{k}{^\mu}\ct{\ell}{^{\nu}}\ct{R}{_{\mu\nu}}=2\cts{R}{}-\ctcn{R}{}-\ct{R}{}\ .
	\end{equation}
	
	Next, define
		\begin{equation}
			\ct{S}{_{\alpha\beta}}\defeq \ct{R}{_{\alpha\beta}}-\frac{1}{6}\ct{g}{_{\alpha\beta}}\ct{R}{}\ ,
		\end{equation}
	which is (twice) the space-time Schouten tensor. Its pullback to $ \N $ and to the leaves are defined as
		\begin{align}
			\cts{S}{_{ab}}\defeq&\frac{1}{2}\ct{e}{^{\alpha}_{a}}\ct{e}{^{\beta}_{b}}\ct{S}{_{\alpha\beta}}\ ,\\
			\ctcn{S}{_{AB}}\defeq&\ctcn{E}{^{a}_{A}}\ctcn{E}{^{b}_{B}}\cts{S}{_{ab}}\ .\label{eq:schouten-2-dim}
		\end{align}
	Using the Gauss \eqref{eq:gauss-N} and Codazzi \eqref{eq:codazzi-N} relations above, it follows that
		\begin{equation}
		2\cts{S}{_{ab}}=\cts{R}{_{ab}}-\frac{1}{2}\ms{_{ab}}\cts{R}{}+2\ct{K}{_{b[a}}\ct{\psi}{_{d]}^{d}}-2\ct{k}{^{d}}\cds{_{[b}}\ct{H}{_{d]a}}+\frac{1}{2}\ms{_{ab}}\prn{\frac{1}{2}\ctcn{R}{}-\ct{\ell}{^{\nu}}\ct{k}{^{\mu}}\ct{S}{_{\mu\nu}}}\ .
		\end{equation}
	Projecting this last equation to the leaves and using \cref{eq:combination-H} one obtains
		\begin{align}
			2\ctcn{S}{_{AB}}=&\ctcn{E}{^{a}_{A}}\ctcn{E}{^{b}_{B}}\prn{\cts{R}{_{ab}}-\frac{1}{2}\ms{_{ab}}\cts{R}{}}+\lied_{\vec{k}}\ctcn{\sigma}{_{AB}}-2\ctcn{\nu}{_{C(A}}\ctcn{\sigma}{_{B)}^{C}}+\ctcn{\nu}{_{AB}}\ctcn{\theta}{}-\ctcn{\sigma}{_{AB}}\ctcn{\kappa}{}+\ctcn{\theta}{_{AB}}\ct{k}{^d}\ct{\varphi}{_{d}}\nonumber\\
			&-\ctcn{\varphi}{_{A}}\ctcn{\varphi}{_{B}}-\cdcn{_{(B}}\ctcn{\varphi}{_{A)}}+2\ctcn{\varphi}{_{(A}}\cdcn{_{B)}}\ln\dot{F}-\cdcn{_{B}}\prn{\ln\dot{F}}\cdc{_{A}}\prn{\ln\dot{F}}+\cdcn{_{B}}\cdcn{_{A}}\ln\dot{F}\nonumber\\
			&+\frac{1}{2}\mcn{_{AB}}\prn{\frac{1}{2}\ctcn{R}{}+\ct{S}{_{\mu\nu}}\ct{k}{^\mu}\ct{\ell}{^\nu}+\ctcn{\kappa}{}\ctcn{\theta}{}+\lied_{\vec{k}}\ctcn{\theta}{}}\ .
		\end{align}
	As it will be shown later, the traceless part of $ \ctcn{S}{_{AB}} $ plays an important role in the analysis of gravitational radiation. Using the last equation together with \cref{eq:traceless-R},
		\begin{align}
			\ctcn{S}{_{AB}}-\frac{1}{2}\mcn{_{AB}}\ctcn{S}{^{M}_{M}}=&\lied_{\vec{k}}\ctcn{\sigma}{_{AB}}+\frac{1}{2}\ctcn{\nu}{_{AB}}\ctcn{\theta}{}-\frac{1}{2}\ctcn{\sigma}{_{AB}}\ctcn{\kappa}{}-2\ctcn{\nu}{_{C(A}}\ctcn{\sigma}{_{B)}^{C}}+\ct{k}{^d}\ct{\varphi}{_{d}}\ctcn{\sigma}{_{AB}}\nonumber\\
			&-\ctcn{\varphi}{_{A}}\ctcn{\varphi}{_{B}}-\cdcn{_{(B}}\ctcn{\varphi}{_{A)}}+2\ctcn{\varphi}{_{(A}}\cdcn{_{B)}}\ln\dot{F}-\frac{1}{2}\mcn{_{AB}}\prn{2\ctcn{\varphi}{^{M}}\cdcn{_{M}}\ln\dot{F}-\ctcn{\varphi}{_{C}}\ctcn{\varphi}{^{C}}-\cdcn{_{C}}\ctcn{\varphi}{^{C}}}\nonumber\\
			&+\ct{F}{_{AB}}\ ,\label{eq:schouten-shear}
		\end{align}
	where the combination $ \ct{F}{_{AB}} $ depends on the choice of foliation and is defined in \cref{eq:combination-F}. The standard decomposition of the Riemann tensor in terms of the Weyl tensor $ \ct{C}{_{\alpha\beta\gamma}^{\delta}} $ and Ricci tensor can be written as
		\begin{equation}\label{eq:decomposition-Riemann}
			\ct{R}{_{\alpha\beta\gamma}^{\delta}}=\ct{C}{_{\alpha\beta\gamma}^{\delta}}+\ct{g}{_{\gamma[\alpha}}\ct{S}{_{\beta]}^{\delta}}+\delta^{\delta}_{[\beta}\ct{S}{_{\alpha]\gamma}}\ .
		\end{equation}
	Contract now with $ \ctcn{P}{^{\alpha\gamma}}\ctcn{E}{^{\beta}_{B}}\ct{\ell}{_{d}} $ and use \cref{eq:gauss-N} to substitute for the left-hand side; the result is
		\begin{equation}\label{eq:Sell}
			\ctlcn{S}{_{B}}=2\mcn{^{AC}}\cdcn{_{[A}}\ctcn{\theta}{_{B]C}}-\ctcn{\theta}{_{C[A}}\ctcn{\varphi}{_{B]}}\mcn{^{CA}}+\ctlcn{d}{_{B}}
		\end{equation}
	where one defines
		\begin{align}
			\ctlcn{S}{_{A}}&\defeq\frac{1}{2}\ct{\ell}{^\beta}\ctcn{E}{^\alpha_{A}}\ct{S}{_{\alpha\beta}}\ ,\\
			\ctlcn{d}{_{A}}&\defeq \ct{\ell}{^\mu}\ct{\ell}{^{\nu}}\ct{k}{_\rho}\ct{E}{^{\alpha}_{A}}\ct{C}{_{\mu\alpha\nu}^{\rho}}\ .
		\end{align}
	Similarly, using \cref{eq:codazzi-NKappa},	
		\begin{equation}\label{eq:SN}
			\ctkcn{S}{_{A}}=\cdcn{_{C}}\ctcn{\kappa}{_{B}^{C}}-\cdcn{_{B}}\ctcn{\kappa}{}+\ctkcn{d}{_{A}}
		\end{equation}	
	where now
		\begin{align}
			\ctkcn{S}{_{A}}&\defeq\frac{1}{2}\ct{k}{^\beta}\ctcn{E}{^\alpha_{A}}\ct{S}{_{\alpha\beta}}\ ,\\
			\ctkcn{d}{_{A}}&\defeq \ct{k}{^\mu}\ct{k}{^{\nu}}\ct{\ell}{_\rho}\ct{E}{^{\alpha}_{A}}\ct{C}{_{\mu\alpha\nu}^{\rho}}\ .
		\end{align}
	Also, one can contract \cref{eq:decomposition-Riemann} with $\ctcn{E}{^{\alpha}_{A}}\ctcn{E}{^{\beta}_{B}}\ctcn{E}{^{\gamma}_{C}} \ctcn{W}{_{\delta}^{D}} $ and then take the traces with $ \mcn{^{AC}}\delta_{D}^{B} $ which, using \cref{eq:gauss-N,eq:gauss-Foliation} on the right-hand side, yields
		\begin{equation}
			2\ctcn{S}{_{M}^{M}}=-\ct{C}{_{AB}^{AB}}+\ctcn{R}{}+\ctcn{\theta}{}\ctcn{\kappa}{}-2\ctcn{\nu}{_{AC}}\ctcn{\theta}{^{AC}}\ .\label{eq:schouten-trace}
		\end{equation}
	The first scalar on the right-hand side is defined as
		\begin{equation}
			\ct{C}{_{AB}^{AB}}\defeq \ctcn{P}{_{\delta}^\beta}\ctcn{P}{^{\alpha\gamma}}\ctcn{C}{_{\alpha\beta\gamma}^{\delta}}
		\end{equation}
	and carries information about the Coulomb part of the Weyl tensor\footnote{In the null tetrad containing $ \cbrkt{\ct{\ell}{^{\alpha}},\ct{k}{^\alpha}} $, this is $ -4\Re\prn{\Psi_{2}} $.}. Also, $ \ctlcn{d}{_{A}} $ and $ \ctkcn{d}{_{A}} $ carry part of the radiative information of the Weyl tensor\footnote{They are, respectively, $ \Psi_1 $ and $ \Psi_3 $.}.
	\subsection{Identification of the news tensor on a null hypersurface}
	 In this section, \emph{conformal rescalings} acting as
	 	\begin{equation}
	 		\cta{g}{_{\alpha\beta}}=\lambda^2\ct{g}{_{\alpha\beta}}
	 	\end{equation}
	 and
	 	\begin{equation}\label{eq:conformal-mc}
		 \prn{\cta{k}{^{a}},\ctsa{g}{_{ab}}}=\prn{\lambda^{-1}\ct{k}{^{a}},\lambda^{2}\ms{_{ab}}} \ ,
	 	\end{equation}
	 are considered, where $ \lambda>0 $ is a smooth function. The motivation is that at conformal infinity with $ \Lambda=0 $, \cref{eq:conformal-mc} is precisely the gauge conformal freedom (setting $ \lambda=\omega $, see the notation in \cref{sec:conf-space}) as will be remarked in \cref{sec:conf-space}. So far, however, only a null hypersurface foliated by topological two-spheres is needed. Under \cref{eq:conformal-mc}, the two-dimensional projected Schouten tensor \eqref{eq:schouten-2-dim} transforms as
	 	\begin{align}
	 	\ctcna{S}{_{AB}}=&\ctcn{S}{_{AB}}-\lambda^{-1}\cdcn{_{A}}\ctcn{\lambda}{_{B}}+2\lambda^{-2}\ctcn{\lambda}{_{A}}\ctcn{\lambda}{_{B}}-\frac{1}{2}\lambda^{-2}\mcn{_{AB}}\ctcn{\lambda}{_{M}}\ctcn{\lambda}{^{M}}\nonumber\\ 
	 	&+ \lambda^{-1}\prn{\ctcn{\theta}{_{AB}}\ct{k}{^{\mu}}\ct{\lambda}{_{\mu}}+\ctcn{\kappa}{_{AB}}\ct{\ell}{^{\mu}}\ct{\lambda}{_{\mu}}}+\lambda^{-2}\mcn{_{AB}}\ct{k}{^\mu}\ct{\lambda}{_{\mu}}\ct{\ell}{^{\nu}}\ct{\lambda}{_{\nu}}\ ,	 	
	 	\end{align}
	 where $ \ct{\lambda}{_{\mu}}\defeq\cd{_{\mu}}\lambda $ and $ \ctcn{\lambda}{_{A}}\defeq \cdcn{_{A}}\lambda $.
	 Then the tensor field
	 	\begin{equation}\label{eq:U-definition}
				\ctcn{U}{_{AB}}\defeq\ct{S}{_{AB}}+\frac{1}{2}\mcn{_{AB}}\prn{\ctcn{\nu}{_{MC}}\ctcn{\sigma}{^{MC}}-\frac{1}{2}\ctcn{\kappa}{}\ctcn{\theta}{}+\frac{1}{2}\ctcn{C}{_{MD}^{MD}}}-\frac{1}{2}\prn{\theta\ctcn{\nu}{_{AB}}+\ctcn{\kappa}{}\ctcn{\sigma}{_{AB}}}
		\end{equation}
	has the following properties
		\begin{align}
			\ctcn{U}{_{M}^M}&=\ctcn{K}{}\ \label{eq:U-trace},\\
			\ctcna{U}{_{AB}}&=\ctcn{U}{_{AB}}-\lambda^{-1}\cdcn{_{A}}\ctcn{\lambda}{_{B}}+2\lambda^{-2}\ctcn{\lambda}{_{A}}\ctcn{\lambda}{_{B}}-\frac{1}{2}\lambda^{-2}\mcn{_{AB}}\ctcn{\lambda}{_{M}}\ctcn{\lambda}{^{M}}\ .\label{eq:conformal-U}
		\end{align}
	This particular conformal behaviour and trace are suitable for application of the following theorems adapted from \cite{Fernandez-AlvarezSenovilla2022b}
		\begin{thm}[The tensor $ \rho $ \cite{Fernandez-AlvarezSenovilla2022b}]\label{thm:rho-tensor}
		If a leaf $ \Sc_{C} $ has $ \mathbb{S}^2 $-topology, there is a unique symmetric tensor field $ \ct{\rho}{_{AB}} $  whose behaviour under conformal rescalings \eqref{eq:conformal-mc} is as in \eqref{eq:conformal-U} and satisfies the equation
			\begin{equation}\label{eq:rho-diff-eq}
				\cdc{_{[C}}\ct{\rho}{_{A]B}}=0
			\end{equation}
		in any conformal frame. This tensor field must have a trace $ \ct{\rho}{^E_E}=aK $ and obeys  
			\begin{equation}\label{eq:rho-lie-CKVF}
				\lied_{\vec{\chi}}\ct{\rho}{_{AB}}=-a\cdc{_{A}}\cdc{_{B}}\phi
			\end{equation}
		independently of the conformal frame, where $ \ct{\chi}{^A} $ is any CKVF of $ \prn{\Sc_C,\mc{_{AB}}} $ and $ \phi\defeq\cdc{_M}\ct{\chi}{^M}/2 $. Specifically, it is invariant under transformations generated by KVF (and homothetic Killing vectors) of $ \prn{\Sc_C,\mc{_{AB}}} $. Furthermore, it is given for round spheres by $ \ct{\rho}{_{AB}}=\mc{_{AB}}aK/2 $.
		\end{thm}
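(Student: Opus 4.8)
The plan is to follow the route behind the analogous statement in \cite{Fernandez-AlvarezSenovilla2022b}, adapted to the present conventions, treating $\ct{\rho}{_{AB}}$ as a two–dimensional Schouten–type (``conformal–to–Einstein'') object. I would split the argument into \textbf{existence}, \textbf{uniqueness}, and then read off the trace formula and the Lie–derivative identity as corollaries. For existence: since $\Sc_{C}$ is a topological $\mathbb{S}^{2}$, the uniformization theorem (equivalently, solvability of the Liouville equation on $\mathbb{S}^{2}$) provides a representative of the conformal class $[\mc{_{AB}}]$ that is a round metric $\mathring{q}_{AB}$ of constant Gaussian curvature $\mathring{K}$. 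In that distinguished frame I would simply set $\ct{\rho}{_{AB}}\defeq\tfrac{a}{2}\,\mathring{K}\,\mathring{q}_{AB}$; it is symmetric, has trace $a\mathring{K}$, and satisfies \eqref{eq:rho-diff-eq} trivially because $\mathring{K}$ is constant and (in this frame) the leaf connection is metric. I would then \emph{define} $\ct{\rho}{_{AB}}$ in every other conformal frame by imposing the transformation rule \eqref{eq:conformal-U}.

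Two points then need checking. First, well–definedness: the inhomogeneous part of \eqref{eq:conformal-U} is precisely the conformal anomaly of a two–dimensional Schouten–type tensor, hence it obeys the cocycle identity under composition of two rescalings; since any two round representatives of $[\mc{_{AB}}]$ differ by a conformal diffeomorphism of $\mathbb{S}^{2}$ (a Möbius map, acting as a further rescaling), the transported $\ct{\rho}{_{AB}}$ does not depend on the choice. Second, conformal covariance of the defining equation: using the change of the Levi–Civita connection under $\mc{_{AB}}\mapsto\lambda^{2}\mc{_{AB}}$, the two–dimensional identity $v_{[C}\mu_{A]B}+q_{[A|B|}v^{D}\mu_{DC]}=0$ valid on a surface for any trace–free symmetric $\mu_{AB}$, and the two–dimensional Riemann tensor \eqref{eq:gaussian-curvature}, a direct computation shows that $\cdc{_{[C}}\ct{\rho}{_{A]B}}=0$ is conformally covariant on tensors carrying the weight \eqref{eq:conformal-U} and having trace proportional to $K$; the residual term in that computation is a multiple of $K\,\mc{_{[A|B|}}v_{C]}$ and its vanishing simultaneously fixes the constant $a$. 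Thus the equation holds in every frame and a tensor with all the stated properties exists.

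For uniqueness, given two such tensors $\ct{\rho}{_{AB}},\ct{\tilde\rho}{_{AB}}$, their difference $\ct{\tau}{_{AB}}\defeq\ct{\rho}{_{AB}}-\ct{\tilde\rho}{_{AB}}$ has vanishing conformal anomaly, so it is a genuine (conformally invariant) tensor; it is symmetric and satisfies $\cdc{_{[C}}\ct{\tau}{_{A]B}}=0$ in every frame. Comparing this equation in two frames and using the same two–dimensional identity shows that the trace of $\ct{\tau}{_{AB}}$ must vanish. Contracting $\cdc{_{[C}}\ct{\tau}{_{A]B}}=0$ with $\mc{^{CB}}$ then gives $\cdc{^{B}}\ct{\tau}{_{AB}}=0$, so $\ct{\tau}{_{AB}}$ is a symmetric, trace–free, divergence–free tensor on $\mathbb{S}^{2}$; in an isothermal chart this is exactly the statement that the associated quadratic differential is holomorphic (equivalently, a spin–weight–$2$ field annihilated by $\cethb$). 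On $\mathbb{S}^{2}\cong\mathbb{CP}^{1}$ there are no nonzero such objects (by Riemann--Roch, or because the spin–$2$ spherical harmonics start at $\ell=2$ and $\cethb$ is injective on them), hence $\ct{\tau}{_{AB}}=0$. I expect this to be the main obstacle: the crux is the non–existence of holomorphic quadratic differentials on the sphere, together with the algebraic verification that the anomalous rule \eqref{eq:conformal-U} is compatible with imposing \eqref{eq:rho-diff-eq} in all frames; everything else is bookkeeping.

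Finally, for the remaining assertions: tracing \eqref{eq:conformal-U} shows that $\ct{\rho}{^{E}_{E}}$ transforms exactly as the Gaussian curvature, and since it equals $a\mathring{K}$ in the round frame it equals $aK$ in every frame; in a round frame $\ct{\rho}{_{AB}}=\tfrac{a}{2}K\,\mc{_{AB}}$ is the very definition. For the Lie–derivative identity I would use that a CKVF $\ct{\chi}{^{A}}$ has $\lied_{\vec{\chi}}\mc{_{AB}}=2\phi\,\mc{_{AB}}$ with $\phi=\cdc{_{M}}\ct{\chi}{^{M}}/2$, so the flow $\Phi_{t}$ it generates pulls $\mc{_{AB}}$ back to a conformally rescaled metric with $\lambda=1+t\phi+O(t^{2})$; by the uniqueness just proved, $\ct{\rho}{_{AB}}$ is a diffeomorphism–equivariant functional of the conformal class, so $\Phi_{t}^{*}\ct{\rho}{_{AB}}=\ct{\rho}{_{AB}}[\Phi_{t}^{*}\mc{}]$, and differentiating at $t=0$ while linearising \eqref{eq:conformal-U} at $\lambda=1$ (only the term $-\lambda^{-1}\cdc{_{A}}\ct{\lambda}{_{B}}$ contributes to first order) yields $\lied_{\vec{\chi}}\ct{\rho}{_{AB}}=-a\,\cdc{_{A}}\cdc{_{B}}\phi$. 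When $\ct{\chi}{^{A}}$ is a Killing field ($\phi=0$) or a homothety ($\phi$ constant) the right–hand side vanishes, which gives the stated invariance.
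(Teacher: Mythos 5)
The paper does not prove this theorem itself: it is imported verbatim (up to notation) from \cite{Fernandez-AlvarezSenovilla2022b}, and only the foliation corollary that follows it is proved here. Your reconstruction is correct and follows essentially the same route as that reference (and Geroch's original argument): existence via the round representative supplied by uniformization plus transport with the Schouten-type anomaly, uniqueness via the fact that the difference of two solutions is a conformally invariant, symmetric, trace-free, divergence-free tensor on $\mathbb{S}^2$ --- a holomorphic quadratic differential, hence zero --- and the CKVF identity \eqref{eq:rho-lie-CKVF} from diffeomorphism equivariance of the (unique) solution together with the linearisation of \eqref{eq:conformal-U}.
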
	
	Readily, the generalisation to a tensor field $ \ctcn{\rho}{_{ab}} $ on $ \N $ follows,
		\begin{corollary}[The tensor field $ \rho $ for a foliated $\N$ with $\mathbb{S}^2$ leaves]\label{thm:rho-tensor-foliation}
			Assume $\N$ is foliated by leaves $ \Sc_{C} $ with $ \mathbb{S}^2 $-topology, with foliation function $ F $ as in \cref{eq:foliation-F}. Then, there is a unique tensor field $ \ctcn{\rho}{_{ab}} $  on $ \N $ orthogonal to $ \ct{k}{^a} $ (equivalently, a one-parameter family of symmetric tensor fields $ \ctcn{\rho}{_{AB}}\prn{F}\defeq \ctcn{E}{^a_A}\ctcn{E}{^b_B}\ctcn{\rho}{_{ab}} $ ) whose behaviour under conformal rescalings \eqref{eq:conformal-mc} is as in \cref{eq:conformal-U} and satisfies the equation
			 	\begin{equation}\label{eq:rho-diff-eq-foliation}
				\ctcn{P}{^d_a}\ctcn{P}{^e_b}\ctcn{P}{^f_c}\cds{_{[f}}\ctcn{\rho}{_{d]e}}= 0\ 
				\end{equation}
			in any conformal frame. This tensor field must have a trace $ \ctcn{\rho}{^e_e}\defeq \ctcn{P}{^{ae}}\ctcn{\rho}{_{ae}}=a\cscn{K} $ and reduces, at each leaf, to the corresponding tensor of \cref{thm:rho-tensor} with all its properties.
		\end{corollary}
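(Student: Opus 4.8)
The plan is to obtain everything leaf by leaf from \cref{thm:rho-tensor}. Fix a leaf $\Sc_{C}$; by hypothesis it is a topological $\mathbb{S}^{2}$, and the connection $\cdcn{}$ of \cref{sec:news-null-hyp} restricts there to the Levi-Civita connection $\cdc{}$ of $\prn{\Sc_{C},\mc{_{AB}}}$ with $\mc{_{AB}}=\mcn{_{AB}}\evalat{F=C}$. Hence \cref{thm:rho-tensor} applies to each $\Sc_{C}$ and yields a unique symmetric $\ct{\rho}{_{AB}}$ with conformal weight \eqref{eq:conformal-U}, trace $a\cscn{K}$, and $\cdc{_{[C}}\ct{\rho}{_{A]B}}=0$. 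I would then let $C$ vary: the tensor of \cref{thm:rho-tensor} is determined by the leaf metric through the linear system whose solvability on the sphere is the content of that theorem, and it depends smoothly on $C$; this produces a smooth one-parameter family $\ctcn{\rho}{_{AB}}\prn{F}$ and, through the bases $\cbrkt{\ctcn{E}{^a_{A}}},\cbrkt{\ctcn{W}{_{a}^{A}}}$, a symmetric tensor field $\ctcn{\rho}{_{ab}}\defeq\ctcn{W}{_{a}^{A}}\ctcn{W}{_{b}^{B}}\ctcn{\rho}{_{AB}}$ on $\N$. By construction $\ctcn{\rho}{_{ab}}\ct{k}{^{b}}=0$, the definition is independent of the coordinates $\zeta^{A}$ (because $\ctcn{\rho}{_{AB}}$ is a genuine tensor on each leaf), and the pullback $\ctcn{E}{^a_A}\ctcn{E}{^b_B}\ctcn{\rho}{_{ab}}$ to $\Sc_{C}$ is precisely the $\ct{\rho}{_{AB}}$ just constructed — which will reduce all the remaining claims to leaf-wise statements.

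The conformal behaviour then follows at once: the rescaling \eqref{eq:conformal-mc} induces on each $\Sc_{C}$ the two-dimensional rescaling $\mc{_{AB}}\mapsto\lambda^{2}\mc{_{AB}}$, every term on the right-hand side of \eqref{eq:conformal-U} is intrinsic to the leaf, and $\lambda>0$ keeps $\Sc_{C}$ a topological sphere, so the transformation law of \cref{thm:rho-tensor} is exactly \eqref{eq:conformal-U} for the family, in every conformal frame. For the trace, $\ctcn{\rho}{^{e}_{e}}=\ctcn{P}{^{ae}}\ctcn{\rho}{_{ae}}=\mcn{^{AB}}\ctcn{\rho}{_{AB}}$, which the trace condition of \cref{thm:rho-tensor} forces to be $a\cscn{K}$. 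The one step requiring care is \eqref{eq:rho-diff-eq-foliation}: contracting it with $\ctcn{E}{^{a}_{A}}\ctcn{E}{^{b}_{B}}\ctcn{E}{^{c}_{C}}$ absorbs the projectors, since $\ctcn{E}{^{a}_{A}}\ctcn{P}{^{d}_{a}}=\ctcn{E}{^{d}_{A}}$, and leaves the antisymmetrised $\cds{}$-derivative of $\ctcn{\rho}{}$ with all indices restricted to leaf directions. Here the connection-decomposition formulae of \cref{sec:news-null-hyp} show that for a tensor orthogonal to $\ct{k}{^{a}}$ in every index one has $\ctcn{E}{^{a}_{A}}\ctcn{E}{^{d}_{D}}\ctcn{E}{^{e}_{E}}\cds{_{a}}\ctcn{\rho}{_{de}}=\cdcn{_{A}}\ctcn{\rho}{_{DE}}$ with \emph{no} correction terms, because the only possible corrections are proportional to the divergence/expansion tensors $\ctcn{\kappa}{_{AB}},\ctcn{\theta}{_{AB}}$ times a contraction of $\ctcn{\rho}{}$ with $\ct{k}{}$, and these vanish since $\ctcn{\rho}{_{ab}}\ct{k}{^{b}}=0$. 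Therefore the projected \eqref{eq:rho-diff-eq-foliation} equals $\cdcn{_{[C}}\ctcn{\rho}{_{A]B}}$, which on each leaf is $\cdc{_{[C}}\ct{\rho}{_{A]B}}=0$ by \cref{thm:rho-tensor}; and since conformal rescalings preserve the topology of the leaves, this holds in every conformal frame.

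Uniqueness is then immediate: the conditions imposed on $\ctcn{\rho}{_{ab}}$ are, leaf by leaf, exactly those characterising the tensor of \cref{thm:rho-tensor}, so if two tensor fields on $\N$ orthogonal to $\ct{k}{^{a}}$ both have conformal weight \eqref{eq:conformal-U} and satisfy \eqref{eq:rho-diff-eq-foliation} in every conformal frame, their pullbacks to each $\Sc_{C}$ coincide by the uniqueness part of \cref{thm:rho-tensor}, hence so do the fields on $\N$; the same argument shows that such a field necessarily has trace $a\cscn{K}$ and restricts on each leaf to the tensor of \cref{thm:rho-tensor} with all its properties, in particular the Lie-derivative law \eqref{eq:rho-lie-CKVF}. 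The only genuine obstacle is the reduction in the previous paragraph — checking that no $\ctcn{\theta}{}$- or $\ctcn{\kappa}{}$-type terms survive the projection of the antisymmetrised derivative — which is settled by the orthogonality $\ctcn{\rho}{_{ab}}\ct{k}{^{b}}=0$ together with the connection identities already established in \cref{sec:news-null-hyp}; everything else is a direct leaf-by-leaf transcription of \cref{thm:rho-tensor}.
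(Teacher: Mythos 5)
Your proposal is correct and follows essentially the same route as the paper: pull back \cref{eq:rho-diff-eq-foliation} to each leaf to recover \cref{eq:rho-diff-eq}, apply \cref{thm:rho-tensor} leaf by leaf, and assemble the resulting family via $\ctcn{W}{_{a}^{A}}\ctcn{W}{_{b}^{B}}\ct{\rho}{_{AB}}$ using \cref{eq:foliation-cap,eq:foliation-cup}, with uniqueness inherited from the leaf-wise uniqueness. Your explicit check that the projected antisymmetrised $\cds{}$-derivative acquires no $\ctcn{\kappa}{}$- or $\ctcn{\theta}{}$-correction terms (because $\ctcn{\rho}{_{ab}}\ct{k}{^{b}}=0$) is a detail the paper leaves implicit, but it is the same argument.
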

		\begin{proof}
			The proof follows the same lines of that one in corollary 6.2 of \cite{Fernandez-AlvarezSenovilla2022b}. The pullback of \cref{eq:rho-diff-eq-foliation} to a leaf $ \Sc_{C} $ gives precisely \cref{eq:rho-diff-eq} on that leaf. Then one can define $ \ctcn{\rho}{_{ab}} $ at each leaf as
				\begin{equation}
					\ctcn{\rho}{_{ab}}\evalat{\Sc_{C}}=\prn{\ctcn{W}{_{a}^{A}}\ctcn{W}{_{b}^{B}}\ct{\rho}{_{AB}}}\evalat{\Sc_{C}}\ .
				\end{equation}
			Noting \cref{eq:foliation-cap,eq:foliation-cup}, the unique solution of \cref{eq:rho-diff-eq-foliation} is given by
				\begin{equation}
					\ctcn{\rho}{_{ab}}=\ctcn{W}{_{a}^A}\ctcn{W}{_{b}^{B}}\ct{\rho}{_{AB}}\evalat{\Sc_{F}}\ ,\quad\ct{k}{^d}\ctcn{\rho}{_{ad}}=0 \ .
				\end{equation}
			The function $ \ctcn{K}{} $ \eqref{eq:gaussian-curvature} coincides with the Gaussian curvature $ \ct{K}{} $ of each leaf $ \Sc $, and hence one has that $ \ctcn{\rho}{_{ab}}\ctcn{P}{^{ab}}=a\ctcn{K}{} $ and that for round (family of) metrics $ \mcn{_{AB}} $, $ 2\ctcn{\rho}{_{AB}}=a\ctcn{K}{}\mcn{_{AB}} $\ .
		\end{proof}
		These results and the definition \eqref{eq:U-definition} of $ \ctcn{U}{_{AB}} $ serve to define a \emph{news tensor} on $ \N $:
		\begin{thm}[News tensor for a foliated null hypersurface]\label{thm:news-N}
			Let $ \N $ be a null hypersurface foliated by leaves with $ \mathbb{S}^2 $-topology and with $ F $ the defining function \eqref{eq:foliation-F}. Then, there is a one-parameter family (depending on $ F $) of symmetric, traceless, conformal-invariant tensor fields
			\begin{equation}\label{eq:News}
			\ctcn{N}{_{AB}}\defeq \ctcn{U}{_{AB}}-\ctcn{\rho}{_{AB}}\spacef,
			\end{equation}
			that satisfies the conformal-invariant equation
			\begin{equation}\label{eq:diffUAB-foliations}
			\cdcn{_{[A}}\ctcn{U}{_{B]C}}=\cdcn{_{[A}}\ctcn{N}{_{B]C}}\spacef,
			\end{equation}
			where $ \ctcn{\rho}{_{AB}} $ is the family of tensor fields of \cref{thm:rho-tensor-foliation} (for $ a=1 $). Besides, $ \ctcn{N}{_{AB}} $ is unique with these properties.
		\end{thm}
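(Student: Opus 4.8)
The plan is to verify the four asserted properties of $\ctcn{N}{_{AB}}\defeq\ctcn{U}{_{AB}}-\ctcn{\rho}{_{AB}}$ one at a time, and then to obtain uniqueness by reducing it to the uniqueness clause of \cref{thm:rho-tensor-foliation}; essentially no new computation is needed, since the trace \eqref{eq:U-trace} and conformal behaviour \eqref{eq:conformal-U} of $\ctcn{U}{_{AB}}$, and the existence and properties of $\ctcn{\rho}{_{AB}}$, are already in hand. First I would observe that $\ctcn{U}{_{AB}}$ is manifestly symmetric from \eqref{eq:U-definition} (it is built from $\ctcn{S}{_{AB}}$, $\mcn{_{AB}}$, $\ctcn{\nu}{_{AB}}$ and $\ctcn{\sigma}{_{AB}}$, all symmetric), and $\ctcn{\rho}{_{AB}}$ is symmetric by \cref{thm:rho-tensor}; hence $\ctcn{N}{_{AB}}$ is symmetric. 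Contracting with $\mcn{^{AB}}$ and using \eqref{eq:U-trace} together with the trace $\ctcn{\rho}{_M^M}=\ctcn{K}{}$ of the $a=1$ representative supplied by \cref{thm:rho-tensor-foliation} yields $\ctcn{N}{_M^M}=0$.

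Next, conformal invariance follows by cancellation: under \eqref{eq:conformal-mc} both $\ctcn{U}{_{AB}}$ (through \eqref{eq:conformal-U}) and $\ctcn{\rho}{_{AB}}$ (by its very characterisation in \cref{thm:rho-tensor-foliation}) acquire the identical inhomogeneous terms $-\lambda^{-1}\cdcn{_{A}}\ctcn{\lambda}{_{B}}+2\lambda^{-2}\ctcn{\lambda}{_{A}}\ctcn{\lambda}{_{B}}-\tfrac12\lambda^{-2}\mcn{_{AB}}\ctcn{\lambda}{_{M}}\ctcn{\lambda}{^{M}}$, which therefore drop out of the difference, so $\ctcna{N}{_{AB}}=\ctcn{N}{_{AB}}$. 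For the differential identity \eqref{eq:diffUAB-foliations} I would simply write $\cdcn{_{[A}}\ctcn{N}{_{B]C}}=\cdcn{_{[A}}\ctcn{U}{_{B]C}}-\cdcn{_{[A}}\ctcn{\rho}{_{B]C}}$ and invoke \eqref{eq:rho-diff-eq-foliation}, whose pullback to any leaf $\Sc_{C}$ reads $\cdcn{_{[A}}\ctcn{\rho}{_{B]C}}=0$ (the family connection $\cdcn{}$ restricting on each leaf to its Levi-Civita connection); since $\ctcn{\rho}{_{AB}}$ solves this in every conformal frame, so does the resulting identity for $\ctcn{N}{_{AB}}$ — which is precisely the sense in which \eqref{eq:diffUAB-foliations} is conformal-invariant.

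Finally, for uniqueness I would take any symmetric, traceless, conformal-invariant $\ctcn{V}{_{AB}}$ satisfying \eqref{eq:diffUAB-foliations}, i.e.\ $\cdcn{_{[A}}\ctcn{V}{_{B]C}}=\cdcn{_{[A}}\ctcn{U}{_{B]C}}$, and examine $\ctcn{U}{_{AB}}-\ctcn{V}{_{AB}}$: by the same bookkeeping it is symmetric, has trace $\ctcn{K}{}$, transforms as in \eqref{eq:conformal-U} (the invariant $\ctcn{V}{_{AB}}$ contributing nothing inhomogeneous), and satisfies $\cdcn{_{[A}}\ctcn{U}{_{B]C}}-\cdcn{_{[A}}\ctcn{V}{_{B]C}}=0$ in every frame, i.e.\ \eqref{eq:rho-diff-eq-foliation}. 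The uniqueness assertion of \cref{thm:rho-tensor-foliation} (for $a=1$) then forces $\ctcn{U}{_{AB}}-\ctcn{V}{_{AB}}=\ctcn{\rho}{_{AB}}$, whence $\ctcn{V}{_{AB}}=\ctcn{N}{_{AB}}$. The only point calling for care — the closest thing to an obstacle in an otherwise purely bookkeeping argument — is consistency of the trace normalisation: one must use the $a=1$ version of $\ctcn{\rho}{_{AB}}$ throughout so that the traces of $\ctcn{U}{_{AB}}$ and $\ctcn{\rho}{_{AB}}$ cancel exactly, and one must confirm that the projected equation \eqref{eq:rho-diff-eq-foliation}, its leafwise form $\cdcn{_{[A}}\ctcn{\rho}{_{B]C}}=0$, and the antisymmetrised $\cdcn{}$-derivative appearing in \eqref{eq:diffUAB-foliations} are genuinely the same equation.
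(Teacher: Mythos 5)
Your proposal is correct and follows essentially the same route as the paper's (much terser) proof: symmetry, tracelessness and conformal invariance of $\ctcn{N}{_{AB}}$ are read off from \cref{eq:U-definition,eq:U-trace,eq:conformal-U} together with the properties of $\ctcn{\rho}{_{AB}}$ from \cref{thm:rho-tensor-foliation}, and uniqueness is reduced to the uniqueness clause of that corollary via the difference tensor $\ctcn{U}{_{AB}}-\ctcn{V}{_{AB}}$. The only point you flag as needing care --- that the fully projected $\cds{}$-antisymmetrised derivative in \eqref{eq:rho-diff-eq-foliation} coincides leafwise with the $\cdcn{}$-derivative in \eqref{eq:diffUAB-foliations} --- is indeed settled by the paper's relation between the two connections for tensors orthogonal to $\ct{\ell}{_{a}}$ and $\ct{k}{^a}$, so your argument is complete.
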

		\begin{proof}
			The one-parameter family of tensor fields $ \ctcn{N}{_{AB}} $ is symmetric, traceless and conformally invariant as a consequence of \cref{eq:U-definition,eq:conformal-U,eq:U-trace} and \cref{thm:rho-tensor-foliation}. The uniqueness of $ \ctcn{N}{_{AB}} $ follows from \cref{thm:rho-tensor-foliation} too and \cref{eq:diffUAB-foliations}.
		\end{proof}
		Thus, the role of $ \ctcn{\rho}{_{AB}} $ is to remove the trace of $ \ctcn{U}{_{AB}} $ and to make it conformally invariant. Observe that an expression for the resulting news tensor $ \ctcn{N}{_{AB}} $, using \cref{eq:schouten-shear,eq:schouten-trace} and decomposition \cref{eq:U-definition}, can be obtained,
			\begin{align}
				\ctcn{N}{_{AB}}&=\lied_{\vec{N}}\ctcn{\sigma}{_{AB}}-\ctcn{\sigma}{_{AB}}\ctcn{\kappa}{}-2\ctcn{\nu}{_{C(A}}\ctcn{\sigma}{_{B)}^{C}}-\ctcn{\rho}{_{AB}}+\frac{1}{2}\mcn{_{AB}}\ctcn{K}{}\nonumber\\
				&+\ct{k}{^d}\ct{\varphi}{_{d}}\ctcn{\sigma}{_{AB}}-\ctcn{\varphi}{_{A}}\ctcn{\varphi}{_{B}}-\cdcn{_{(B}}\ctcn{\varphi}{_{A)}}+2\ctcn{\varphi}{_{(A}}\cdcn{_{B)}}\ln\dot{F}-\frac{1}{2}\mcn{_{AB}}\prn{2\ctcn{\varphi}{^{M}}\cdcn{_{M}}\ln\dot{F}-\ctcn{\varphi}{_{C}}\ctcn{\varphi}{^{C}}-\cdcn{_{C}}\ctcn{\varphi}{^{C}}}\nonumber\\
				&+\ct{F}{_{AB}}\ ,\label{eq:N-lied-sigma}
			\end{align}
		where $ \ct{F}{_{AB}} $ was defined in \cref{eq:combination-F}.
\section{Radiation on null hypersurfaces in conformal space-time}\label{sec:conf-space}
Now let $ \prn{M,\ct{g}{_{\alpha\beta}}} $ be the conformal completion of a physical space-time ---see \cref{sec:conventions}. In conformal space-time the dynamics are governed by 
		\begin{align}
						&\cd{_\alpha}\ct{N}{_\beta} = -\frac{1}{2}\Omega\ct{S}{_{\alpha\beta}}+ \cs{f}\ct{g}{_{\alpha\beta}}+ \frac{1}{2}\Omega^2\varkappa\ctt{{T}}{_{\alpha\beta}}\spacef,\label{eq:cefesDerN}\\
						&\ct{N}{_\mu}\ct{N}{^\mu}=\frac{\Omega^3}{12}\varkappa \cs{T}-\frac{\Lambda}{3}+ 2\Omega f\spacef,\label{eq:cefesNormN}\\
						&\cd{_\alpha}f= -\frac{1}{2}\ct{S}{_{\alpha\mu}}\ct{N}{^\mu}+\frac{1}{2}\Omega\varkappa\ct{N}{^\mu}\ctt{{T}}{_{\alpha\mu}}-\frac{1}{24}\Omega^2\varkappa\cd{_\alpha}\cs{T}- \frac{1}{8}\Omega\varkappa\ct{N}{_\alpha}\cs{T}\spacef,\label{eq:cefesDerF}\\
						&\ct{d}{_{\alpha\beta\gamma}^\mu}\ct{N}{_\mu}+\cd{_{[\alpha}}\prn{\ct{S}{_{\beta]\gamma}}}-\Omega\ct{y}{_{\alpha\beta\gamma}}=0\spacef,\label{eq:cefesDerSchouten}\\
						&\ct{y}{_{\alpha\beta\gamma}}+\cd{_{\mu}}\ct{d}{_{\alpha\beta\gamma}^\mu}= 0\spacef,\label{eq:cefesDerWeyl}\\
						&\ct{R}{_{\alpha\beta\gamma\delta}} =\Omega\ct{d}{_{\alpha\beta\gamma\delta}}+ \ct{g}{_{\alpha[\gamma}}\ct{S}{_{\delta]\beta}}-\ct{g}{_{\beta[\gamma}}\ct{S}{_{\delta]\alpha}}\spacef.	\label{eq:cefesRiemann}
						\end{align}
		The variables entering these equations are the  Schouten tensor $ \ct{S}{_{\alpha\beta}} $, the rescaled Weyl tensor
			\begin{equation}
				\ct{d}{_{\alpha\beta\gamma}^{\delta}}\defeq\frac{1}{\Omega}\ct{C}{_{\alpha\beta\gamma}^{\delta}}\ ,
			\end{equation}
		 the Friedrich scalar $ f $
			\begin{equation}
				f\defeq \frac{1}{4}\cd{_{\mu}}\ct{N}{^{\mu}}+\frac{\Omega}{24}\ct{R}{}\ ,
			\end{equation}
		and the rescaled Cotton-York tensor
			\begin{equation}
				\ct{y}{_{\alpha\beta\gamma}}\defeq\frac{1}{\Omega}\pd{_{[\alpha}}\pt{S}{_{\beta]\gamma}}\ ,
			\end{equation}
		where $ \pd{_{\alpha}} $ and $ \pt{S}{_{\alpha\beta}} $  stand for the physical covariant derivative and physical Schouten tensor, respectively. Also, for convenience, the following definition was introduced,
	\begin{equation}
		\ctt{T}{_{\alpha\beta}}\defeq\ct{T}{_{\alpha\beta}}-\frac{1}{4}\ct{T}{}\ct{g}{_{\alpha\beta}}\ ,
	\end{equation}
and in vacuum  $ \ct{T}{_{\alpha\beta}}=0=\ct{y}{_{\alpha\beta\gamma}} $. These equations are the so called conformal Einstein field equations (CEFE) \cite{Friedrich1981a,Friedrich1981b}; see also \cite{Paetz2013}, or \cite{Fernandez-AlvarezSenovilla2022a} for a detailed derivation with the present conventions.  Evaluation of the CEFE at $ \scri $  yields
						\begin{align}
						&\cd{_\alpha}\ct{N}{_\beta} \eqs f\ct{g}{_{\alpha\beta}}\spacef,\label{eq:cefesScriDerN}\\
						&\ct{N}{_\mu}\ct{N}{^\mu}\eqs -\frac{\Lambda}{3} \spacef,\label{eq:cefesScriNormN}\\
						&\cd{_\alpha}f\eqs -\frac{1}{2}\ct{S}{_{\alpha\mu}}\ct{N}{^\mu}\spacef,\label{eq:cefesScriDerF}\\
						&\ct{d}{_{\alpha\beta\gamma}^\mu}\ct{N}{_\mu}+\cd{_{[\alpha}}\prn{\ct{S}{_{\beta]\gamma}}}\eqs 0\spacef,\label{eq:cefesScriDerSchouten}\\
						&\ct{y}{_{\alpha\beta\gamma}}+\cd{_{\mu}}\ct{d}{_{\alpha\beta\gamma}^\mu}\eqs 0\spacef,\label{eq:cefesScriDerWeyl}\\
						&\ct{R}{_{\alpha\beta\gamma\delta}} \eqs\ct{g}{_{\alpha[\gamma}}\ct{S}{_{\delta]\beta}}-\ct{g}{_{\beta[\gamma}}\ct{S}{_{\delta]\alpha}}\spacef.	\label{eq:cefesScriRiemann}	
						\end{align}
\\
The idea now is to study the relation between the news $ \ctcn{N}{_{AB}} $ of \cref{thm:news-N} defined on a null hypersurface $ \prn{\N,\ms{_{ab}}} $  in conformal space-time $ \prn{M,\ct{g}{_{\alpha\beta}}} $ and the space-time fields entering in \cref{eq:cefesDerF,eq:cefesDerN,eq:cefesDerSchouten,eq:cefesDerWeyl,eq:cefesNormN,eq:cefesRiemann} . This is the kind of analysis that for $ \Lambda=0 $ leads to the relation between the covariant definition of the news \cite{Geroch1977} and the radiative components of the rescaled Weyl tensor. Here the treatment is more general, as any $ \Lambda $ is considered, and the particular case of null infinity with $ \Lambda=0 $ will be considered in any conformal gauge ---contrary to the common choice of a `divergence-free' conformal frame in other covariant approaches \cite{Geroch1977,Ashtekar2014,Fernandez-Alvarez_Senovilla2020a} and, also, to other more restricted versions such as the Bondi gauge of coordinate-based methods or in Newman-Penrose formalism. Also, for a spin-coefficient approach at null infinity in any conformal frame, see \cite{Frauendiener2021}. When $ \ct{N}{_{\alpha}} $ is null and normal to $ \prn{\ms{_{ab}},\N} $, the following components of the rescaled Weyl tensor will be considered:
	\begin{align}
		\ctn{D}{^{ab}}& \defeq\ct{\omega}{_{\alpha}^{a}}\ct{\omega}{_{\beta}^{b}}\ct{N}{^{\mu}}\ct{N}{^{\nu}}\ct{d}{^{\alpha}_{\mu}^{\beta}_{\nu}} ,\label{eq:D-null}\\
		\ctn{C}{^{ab}}&\defeq\ct{\omega}{_{\alpha}^{a}}\ct{\omega}{_{\beta}^{b}}\ct{N}{^{\mu}}\ct{N}{^{\nu}}\ctr{^*}{d}{^{\alpha}_{\mu}^{\beta}_{\nu}}\label{eq:C-null}\ .
	\end{align}
These tensors are not completely independent, as both contain the information determining $ \phi_{3,4} $, while $ \ctn{D}{^{ab}} $ contains $ \Re\prn{\phi_{2}} $ and $ \ctn{C}{^{ab}} $, $ \Im\prn{\phi_{2} }$ ---for a detailed description, see \cite{Fernandez-AlvarezSenovilla2022b}. Hence, these tensors are related to the gravitational `crossing' radiation with respect to $ \N $ and to the Coulomb potential of the gravitational field. 
	\subsection{Adapted null hypersurfaces in conformal space-time}
	The study of \cref{eq:cefesDerF,eq:cefesDerN,eq:cefesDerSchouten,eq:cefesDerWeyl,eq:cefesNormN,eq:cefesRiemann} on  null hypersurfaces in the bulk can be simplifiedby introducing the following notion
		\begin{deff}[Adapted null hypersurface]\label{def:adapted-hyp}
			A null hypersurface $ \prn{\N,\ms{_{ab}}} $  is said to be adapted in any of the following cases:
			\begin{enumerate}
			\item The conformal factor $ \Omega =$ constant$ \neq 0 $  and $ \cd{_{\alpha}}\Omega $ does not vanish at points in $ M\setminus\scri $ belonging to $ \N $\ .\label{it:adapted1}
			\item The conformal factor $\Omega =0$ and $ \cd{_{\alpha}}\Omega $ does not vanish at points of $ \N $. That is, $ \N\equiv\scri $ with $ \Lambda=0 $.\label{it:adapted2}
			\end{enumerate}
		\end{deff}
	Observe that  $ \prn{\N,\ms{_{ab}}} $ being adapted according to case \ref{it:adapted1} is not a conformal-gauge independent statement and that, at the same time, given a $ \prn{\N,\ms{_{ab}}} $ embedded in $ \prn{M,\ct{g}{_{\alpha\beta}}} $ it is always possible to find a \emph{family} of conformal frames where $ \prn{\N,\ms{_{ab}}} $ is adapted. To see this, let $ \Omega\evalat{\N}=\Psi $ be a function of the space-time coordinates. Then, \emph{considering only points of $ \N $ in $ M\setminus\scri $}, any $ \omega\Omega $ with $ \omega\evalat{ \N/\prn{\N\cap\scri} }=A\Psi^{-1}$, with $ A $ a positive constant there does the job. Of course, $ \scri $ with $ \Lambda=0 $ is always adapted (case \ref{it:adapted2}), as by construction $ \Omega=0 $ there \emph{independently of the gauge}. In that sense, $ \scri $ is special and will be treated separately. Also, the intersection between an adapted $ \N $ with $ \Omega= $constant $ \neq 0 $ and $ \scri $ will not be considered, as $ 1/\Psi $ blows up at the intersecting points and the construction cannot be extended up to the conformal boundary. Those points will be implicitly excluded in the rest of the analysis. Of course, one could do the analysis and apply the results of \cref{sec:news-null-hyp} on non-adapted hypersurfaces too, but this goes beyond the scope of the present work.\\
	
	From now on, consider an adapted null hypersurface $ \prn{\N,\ms{_{ab}}}  $. Then, $ \ct{N}{_{\alpha}} $ is normal to $ \N $ and null there, and $ \ct{N}{^a} $ is tangent to the generators. For convenience, define $ \ct{k}{_{\alpha}}\defeq \ct{N}{_{\alpha}} $. Results of \cref{sec:news-null-hyp} now apply, and the aim is to find equations relating \eqref{eq:D-null}  and \eqref{eq:C-null} with the $ \ctcn{N}{_{AB}} $ of \cref{eq:News} present on $ \prn{\N,\ms{_{ab}}} $. One can project \cref{eq:cefesDerN} to find
		\begin{equation}
			\ct{K}{_{ab}} = -\Omega\cts{S}{_{ab}}+ \cs{f}\cts{g}{_{ab}}+ \frac{1}{2}\Omega^2\varkappa\ctt{{\overline{T}}}{_{ab}}\label{eq:K-Schouten}
		\end{equation}
	and
		\begin{equation}\label{eq:kappa-N}
			\ctcn{\kappa}{_{AB}}=-\Omega\ctcn{S}{_{AB}}+f\mcn{_{AB}}+\frac{1}{2}\Omega^2\varkappa\cttcn{T}{_{AB}}\ ,
		\end{equation}
	which can be split into the shear and the expansion
		\begin{align}
			\ctcn{\nu}{_{AB}}&=-\Omega\prn{\ctcn{S}{_{AB}}-\frac{1}{2}\mcn{_{AB}}\ctcn{S}{^{M}_{M}}}+\frac{1}{2}\Omega^2\varkappa\prn{\ctcn{T}{_{AB}}-\frac{1}{2}\mcn{_{AB}}\ctcn{T}{^{M}_{M}}}\ ,\label{eq:shear-N}\\
			\ctcn{\kappa}{}&=-\Omega\ctcn{S}{^{M}_{M}}+2f+\frac{1}{2}\Omega^2\varkappa\prn{\ctcn{T}{^{M}_{M}}-\frac{1}{2}T}\ .\label{eq:expansion-N}
		\end{align}	
	Observe that from \cref{eq:expansion-N,eq:schouten-trace} one can write
		\begin{equation}\label{eq:phi2-R-shear}
			-\Omega\ct{C}{_{AB}^{AB}}+\Omega\ctcn{R}{}+\Omega\ctcn{\theta}{}\ctcn{\kappa}{}-2\Omega\ctcn{\nu}{_{AC}}\ctcn{\theta}{^{AC}}+2\ctcn{\kappa}{}-4f-\Omega^2\varkappa\prn{\ctcn{T}{^{M}_{M}}-\frac{1}{2}T}=0\ .
		\end{equation}	
	Apart from this, one can contract \cref{eq:cefesDerN}	with $ \ct{\ell}{^{\beta}}\ct{e}{^{\alpha}_{a}} $,
		\begin{equation}\label{eq:varphi}
			\ct{\varphi}{_{a}}=\frac{1}{2}\Omega\ct{e}{^{\alpha}_{a}}\ct{\ell}{^{\beta}}\ct{S}{_{\alpha\beta}}-f\ct{\ell}{_{a}}-\frac{1}{2}\Omega^2\varkappa\ctt{T}{_{\alpha\beta}}\ct{e}{^{\alpha}_{a}}\ct{\ell}{^{\beta}}
		\end{equation}
	 and then with $ \ct{N}{^a} $ to get
		\begin{equation}\label{eq:nu}
			\nu=\frac{1}{2}\Omega\ct{S}{_{\mu\nu}}\ct{\ell}{^\mu}\ct{N}{^{\nu}}+f-\frac{1}{2}\Omega^2\varkappa\ctt{T}{_{\mu\nu}}\ct{k}{^{\mu}}\ct{\ell}{^{\nu}}\ ,
		\end{equation}
	Also, using $ \ct{N}{_{\alpha}}\ct{N}{^{\alpha}}\eqn 0 $, \cref{eq:cefesNormN} reads
		\begin{equation}\label{eq:f-Lambda}
			2\Omega f \eqn \frac{\Lambda}{3}-\frac{\Omega^3}{12}\varkappa T\ .
		\end{equation}
	It can be shown (see appendix of \cite{Fernandez-AlvarezSenovilla2022b}) that to compute the crossing radiative components of the rescaled Weyl tensor, it is enough to calculate
		\begin{equation}\label{eq:C-null-misxed}
			\ctn{C}{^{a}_{c}}\defeq \ms{_{cd}}\ctn{C}{^{ad}}\ .
		\end{equation}
	Alternatively, the same information is contained in the componentes
		\begin{align}
			\ctncn{C}{^{A}}\defeq \ct{\ell}{_b}\ctcn{W}{_{a}^{A}}\ctn{C}{^{ab}}\ ,\\
			\ctncn{D}{_{AB}}\defeq \ctcn{E}{^{a}_{A}}\ctcn{E}{^{b}_{B}}\ctn{D}{_{ab}}\ .
		\end{align}
	Observe that these two objects are two-dimensional and the metric $ \mcn{^{AB}} $ can be used to rise and lower indices, which makes them more suitable. Taking the corresponding contractions and pullbacks in \cref{eq:cefesDerSchouten},
		\begin{align}
			\ctncn{C}{_{A}}&=\ctcn{\epsilon}{^{SR}}\prn{\cdcn{_{R}}\ctcn{S}{_{SA}}-\ctcn{\kappa}{_{AR}}\ctlcn{S}{_{S}}-\ctcn{\theta}{_{AR}}\ctncn{S}{_{S}}}+\Omega\ctcnr{^*}{y}{_{A}}\ ,\\	\ctncn{D}{_{AB}}&=\lied_{\vec{N}}\ctcn{S}{_{AB}}-\ctcn{\kappa}{^E_{B}}\prn{\ctcn{S}{_{AE}}-\frac{1}{2}\ct{S}{_{\mu\nu}}\ct{N}{^{\mu}}\ct{\ell}{^{\nu}}\mcn{_{EA}}}-2\ctncn{S}{_{(B}}\prn{\ctcn{E}{^{m}_{A)}}\ct{N}{^{p}}\ct{H}{_{pm}}-\ctcn{\varphi}{_{A)}}}+\ctncn{S}{_{B}}\ctcn{\varphi}{_{A}}\nonumber\\
			&-\cdcn{_{A}}\ctncn{S}{_{B}}+\cts{S}{_{ab}}\ct{N}{^{a}}\ct{N}{^{b}}\ctcn{\theta}{_{AB}}+\Omega\ctcn{y}{_{AB}}  \ ,
		\end{align}
	where one has to use \cref{eq:liedE} written as
		\begin{equation}
			\lied_{\vec{N}}\ctcn{E}{^{a}_{A}}=\prn{\ctcn{E}{^{m}_{A}}\ct{N}{^{p}}\ct{H}{_{pm}}-\ctcn{\varphi}{_{A}}}\ct{k}{^{a}},
		\end{equation}
	and there are matter contributions entering through
	\begin{align}
	\ctcnr{^*}{y}{_{A}}\defeq& \frac{1}{2}\ctcn{E}{^{\alpha}_{A}}\ct{N}{^\rho}\ct{\ell}{_{\sigma}}\ct{\eta}{^{\mu\nu}_{\alpha\rho}}\ct{y}{_{\mu\nu}^\sigma}\ ,\\
	\ctcn{y}{_{AB}}\defeq& \ctcn{E}{^{\alpha}_{A}}\ctcn{E}{^{\beta}_{B}}\ct{N}{^{\mu}}\ct{y}{_{\alpha\mu\beta}}\ 
	\end{align}
	and \cref{eq:cefesDerF} that vanish for $ \ct{T}{_{\alpha\beta}}\eqn 0 $ ---these are absent in the special case $ \Omega=0 $ that is studied in \cref{sec:infinity}. In this form, \cref{eq:U-definition} can be introduced, using decomposition \cref{eq:News}, and also one can use \cref{eq:Sell,eq:SN,eq:cefesDerF}, leading to
		\begin{align}
			\ctncn{C}{_{A}}&=\ctcn{\epsilon}{^{SR}}\Biggl[\cdcn{_{R}}\ctcn{N}{_{SA}}-\cdcn{_{R}}\ctcn{L}{_{SA}}-\ctcn{\kappa}{_{AR}}\prn{\cdcn{_{M}}\ctcn{\sigma}{_{S}^{M}}-\frac{1}{2}\cdcn{_{S}}\ctcn{\theta}{}-\frac{1}{4}\ctcn{\theta}{}\ctcn{\varphi}{_{S}}+\frac{1}{2}\ctcn{\varphi}{^{M}}\ctcn{\sigma}{_{SM}}+\ctlcn{d}{_{S}}}\\
			&-\ctcn{\theta}{_{AR}}\prn{\cdcn{_{M}}\ctcn{\nu}{_{S}^{M}}-\frac{1}{2}\cdcn{_{S}}\ctcn{\kappa}{}+\ctncn{d}{_{S}}}\Biggr]+\Omega\ctcnr{^*}{y}{_{A}}\ \label{eq:C-news},\\
			\ctncn{D}{_{AB}}&=\lied_{\vec{N}}\ctcn{N}{_{AB}}+\lied_{\vec{N}}\ctcn{\rho}{_{AB}}-                  \lied_{\vec{N}}\ctcn{L}{_{AB}}-\ctcn{\kappa}{^{E}_{A}}\prn{-\frac{1}{2}\ct{S}{_{\mu\nu}}\ct{N}{^{\mu}}\ct{\ell}{^{\nu}}\mcn{_{EB}}+\ctcn{N}{_{EB}}+\ctcn{\rho}{_{EB}}-\ctcn{L}{_{EB}}}
			\nonumber\\
			&-2\ctncn{S}{_{(B}}\prn{\ctcn{E}{^{m}_{A)}}\ct{N}{^{p}}\ct{H}{_{pm}}-\ctcn{\varphi}{_{A)}}}+\ctncn{S}{_{B}}\ctcn{\varphi}{_{A}}-\cdcn{_{A}}\ctncn{S}{_{C}}+\cts{S}{_{ab}}\ct{N}{^{a}}\ct{N}{^{b}}\ctcn{\theta}{_{AC}}
			+\Omega\ctcn{y}{_{AB}}\ ,\label{eq:D-news}
		\end{align}
	where
		\begin{equation}
			\ctcn{L}{_{AB}}\defeq-\frac{1}{2}\prn{\ctcn{\theta}{}\ctcn{\nu}{_{AB}}+\ctcn{\kappa}{}\ctcn{\sigma}{_{AB}}}+\frac{1}{2}\mcn{_{AB}}\prn{\ctcn{\nu}{_{MN}}\ctcn{\sigma}{^{MN}}-\frac{1}{2}\ctcn{\kappa}{}\ctcn{\theta}{}+\frac{1}{2}\ctcn{C}{_{MN}^{MN}}}\ .
		\end{equation}
	Using \cref{eq:N-lied-sigma}, one can eliminate $ \ctcn{N}{_{AB}} $ and $ \ctcn{\rho}{_{AB}} $ from the last term on the right-hand side of \cref{eq:D-news},
		\begin{align}
				\ctncn{D}{_{AB}}&=\lied_{\vec{N}}\ctcn{N}{_{AB}}+\lied_{\vec{N}}\ctcn{\rho}{_{AB}}-  \lied_{\vec{N}}\ctcn{L}{_{AB}}\nonumber\\
				&-\ctcn{\kappa}{^{E}_{A}}\Bigl[-\frac{1}{2}\ct{S}{_{\mu\nu}}\ct{N}{^{\mu}}\ct{\ell}{^{\nu}}\mcn{_{EB}}+\lied_{\vec{N}}\ctcn{\sigma}{_{EB}}+\frac{1}{2}\ctcn{\nu}{_{EB}}\ctcn{\theta}{}-\frac{1}{2}\ctcn{\sigma}{_{EB}}\ctcn{\kappa}{}-2\ctcn{\nu}{_{C(E}}\ctcn{\sigma}{_{B)}^{C}}-\frac{1}{2}\ctcn{L}{^{M}_{M}}\mcn{_{EB}}\nonumber\\
				&+\ct{k}{^d}\ct{\varphi}{_{d}}\ctcn{\sigma}{_{EB}}-\ctcn{\varphi}{_{E}}\ctcn{\varphi}{_{B}}-\cdcn{_{(B}}\ctcn{\varphi}{_{E)}}+2\ctcn{\varphi}{_{(E}}\cdcn{_{B)}}\ln\dot{F}-\frac{1}{2}\mcn{_{EB}}\prn{2\ctcn{\varphi}{^{M}}\cdcn{_{M}}\ln\dot{F}-\ctcn{\varphi}{_{C}}\ctcn{\varphi}{^{C}}-\cdcn{_{C}}\ctcn{\varphi}{^{C}}}\nonumber\\
				&+\ct{F}{_{EB}}+\frac{1}{2}\mcn{_{EB}}\ctcn{K}{}\Bigr]-2\ctncn{S}{_{(B}}\prn{\ctcn{E}{^{m}_{A)}}\ct{N}{^{p}}\ct{H}{_{pm}}-\ctcn{\varphi}{_{A)}}}+\ctncn{S}{_{B}}\ctcn{\varphi}{_{A}}-\cdcn{_{A}}\ctncn{S}{_{C}}+\cts{S}{_{ab}}\ct{N}{^{a}}\ct{N}{^{b}}\ctcn{\theta}{_{AC}}
							+\Omega\ctcn{y}{_{AB}}\ .
		\end{align}
	Alternatively to \cref{eq:N-lied-sigma}, one can write an expression for $ \ctcn{N}{_{AB}} $ using \cref{eq:shear-N,eq:U-definition,eq:News},
		\begin{equation}\label{eq:news-asymptotic-shear}
			\ctcn{N}{_{AB}}= -\prn{\ctcn{\rho}{_{AB}}-\frac{1}{2}\mcn{_{AB}}\ctcn{K}{}}-\frac{1}{2}\prn{\ctcn{\theta}{}\ctcn{\nu}{_{AB}}+\ctcn{\kappa}{}\ctcn{\sigma}{_{AB}}}-\frac{1}{\Omega}\ctcn{\nu}{_{AB}}+\frac{1}{2}\Omega\varkappa\prn{\ctcn{T}{_{AB}}-\frac{1}{2}\mcn{_{AB}}\ctcn{T}{^{M}_{M}}}\ ,
		\end{equation}
	but recall that to obtain \cref{eq:shear-N} the CEFE have been used, whereas \cref{eq:N-lied-sigma} is independent of any field equations. Observe that the term with a negative power of $ \Omega $ is regular at $ \Omega=0 $ as from \cref{eq:shear-N} the shear $ \ctcn{\nu}{_{AB}} $ vanishes at infinity. Importantly, one sees that for any adapted null hypersurface, the news carries a contribution from matter fields ---the last two terms of \cref{eq:news-asymptotic-shear}--- that vanish at $ \Omega=0 $.\\
	
	 All in all, \Cref{eq:C-news,eq:D-news} show that hypersurface-crossing gravitational radiation is not determined, in general, by a news tensor $ \ctcn{N}{_{AB}} $ alone when the null hypersurface $ \N $ is embedded in the bulk of the conformal space-time. Instead, this crossing radiation is an interplay between the news $ \ctcn{N}{_{AB}} $, the shears, $ \ctcn{\sigma}{_{AB}} $ and $ \ctcn{\nu}{_{AB}} $, and the expansions, $ \ctcn{\kappa}{} $ and $ \ctcn{\theta}{} $, of the null normals $ \ct{N}{^{\alpha}} $ and $ \ct{\ell}{^{\alpha}} $ to the leaves, the choice of foliation $ F $ and, remarkably,  the tangential radiative components $ {\ctl{d}{_{A}}} $ and the Coulomb part $ \ct{C}{_{AB}^{AB}} $ of the Weyl tensor\footnote{In Newman-Penrose notation, $ \ct{\Psi}{_{1}} $ and $ \ct{\Psi}{_{2}} $, respectively.} $ \ct{C}{_{\alpha\beta\gamma}^{\delta}} $. A lot of simplifications occur when $ \Omega=0 $, i.e., when $ \N=\scri $, and the meaning of $ \ctcn{N}{_{AB}} $ is clearly unveiled then. Also, there are restricted bulk cases where the equations get simpler. These aspects are studied in \cref{sec:infinity,sec:special}, respectively. 
	\subsection{The case of infinity with vanishing cosmological constant}\label{sec:infinity}
	The equations computed so far are valid at infinity $ \scri $ ---$ \Omega=0 $--- only when $ \Lambda=0 $, which, as it is well known, makes the character of $ \scri $ lightlike ---see \cref{eq:f-Lambda}. This is the case considered in this section. Also, at $ \scri $ a number of simplifications take place:
		\begin{enumerate}
		\item The Weyl tensor $ \ct{C}{_{\alpha\beta\gamma}^{\delta}} $ vanishes\footnote{It might be the case, under the assumption of low differentiability of the metric at $ \scri $, that some components of the Weyl tensor survive. This possibility is not considered in this work.} \cite{NewmanRPAC1989}.
		\item $ \prn{\scri,\ms{_{ab}}} $ is umbilical ($ \ctcn{\nu}{_{AB}}=0 $) but, in general, it is expanding $ \ctcn{\kappa}{}\neq0 $ and the leaves are not isometric.
		\item In fact, the expansion of the generators $ \ctcn{k}{} $, Friedrich scalar $ f $ and the acceleration scalar $ \nu $ of \cref{eq:k-geodesic-nu} coincide
			\begin{equation}
				2\nu=2f=\ctcn{\kappa}{}\ .
			\end{equation} 
		This can be seen from \cref{eq:nu,eq:shear-N,eq:expansion-N} evaluated at $ \Omega=0 $. 
		\item  As a consequence of \cref{eq:varphi}, the projection $ \ctcn{\varphi}{_{A}} $  of $ \ct{\varphi}{_{a}} $ to the leaves vanishes.
		\end{enumerate}
		 If one takes the limit $ \Omega=0 $ in \cref{eq:news-asymptotic-shear} one gets an  expression for the news tensor at infinity with $ \Lambda=0 $,
				\begin{equation}\label{eq:news-asymptotic-shear-limit}
					\ctcn{N}{_{AB}}\eqs -\prn{\ctcn{\rho}{_{AB}}-\frac{1}{2}\mcn{_{AB}}\ctcn{K}{}}-\frac{1}{2}\ctcn{\kappa}{}\ctcn{\sigma}{_{AB}}-\lim_{\Omega\rightarrow 0}\prn{\frac{1}{\Omega}\ctcn{\nu}{_{AB}}}\ .
				\end{equation}
			Observe that this formula for the case of round isometric cuts at $ \scri $ reduces to
				\begin{equation}
					\ctcn{N}{_{AB}}\eqs-\lim_{\Omega\rightarrow 0}\prn{\frac{1}{\Omega}\ctcn{\nu}{_{AB}}}\ ,
				\end{equation}
			but this is achieved only at the cost of fixing the conformal gauge. 
			\Cref{eq:news-asymptotic-shear-limit} is related to other expressions in the literature that express the news as the asymptotic shear of $ \cd{_{\alpha}}\Omega $ ---see \cite{Madler2016}.   Alternatively, one can get a different general expression  by 	evaluating  \cref{eq:N-lied-sigma} at $ \scri $ --recall that $ \ctcn{F}{_{AB}} $ depends on the choice of foliation and it is defined in \cref{eq:combination-F}--
					\begin{equation}
						\ctcn{N}{_{AB}}\eqs\lied_{\vec{N}}\ctcn{\sigma}{_{AB}}-\frac{1}{2}\ctcn{\sigma}{_{AB}}\ctcn{\kappa}{}- \ctcn{\rho}{_{AB}}+\frac{1}{2}\mcn{_{AB}}\ctcn{K}{}+\ct{F}{_{AB}}\ .\label{eq:N-lied-sigma-scri}
					\end{equation}
		For a gauge with round isometric cuts and choosing a \emph{canonically adapted foliation} \cite{Fernandez-AlvarezSenovilla2022a}, it gets simplified, 
				\begin{equation}\label{eq:N-lied-sigma-scri-gauge}
					\ctcn{N}{_{AB}}\eqs\lied_{\vec{N}}\ctcn{\sigma}{_{AB}}\ .
				\end{equation}
		\Cref{eq:N-lied-sigma-scri-gauge} is the form used in the Newman-Penrose approach \cite{Newman2009,Adamo2009}. \Cref{eq:news-asymptotic-shear-limit,eq:N-lied-sigma-scri}, remarkably, do not assume any particular conformal gauge nor geometry of the cuts.
		
	Next, using the Bianchi identity for the space-time curvature,
		\begin{equation}
			\cd{_{[\alpha}}\ct{R}{_{\beta\gamma]\delta}^{\mu}}=0\ ,
		\end{equation}
	it follows taking traces that
		\begin{equation}\label{eq:bianchi-id}
			\cd{_{\mu}}\ct{S}{_{\beta}^{\mu}}-\cd{_{\beta}}\ct{S}{_{\rho}^{\rho}}=0\ .
		\end{equation}
	A long calculation, decomposing this last equation with respect to the projector to the leaves \eqref{eq:projector-leaves}, shows that 
		\begin{equation}
			\lied_{\vec{N}}\ctcn{S}{^{M}_{M}}-2\ctcn{\kappa}{}\lied_{\vec{\ell}}f+f\ctcn{S}{^{M}_{M}}+\frac{1}{2}\cdcn{_{M}}\cdcn{^{M}}\ctcn{\kappa}{}-\frac{1}{2}\ctcn{\theta}{}\lied_{\vec{N}}\ctcn{\kappa}{}-\frac{1}{2}\cdcn{_{M}}\prn{\ctcn{\kappa}{}}\cdcn{^{M}}\prn{\ln \dot{F}}\eqs 0\ .
		\end{equation}
	Using \cref{eq:schouten-trace} evaluated at $ \scri $, this equation gives the evolution along $ \ct{N}{^{a}} $ of the curvature $ \ctcn{R}{} $ ---which recall that  gives the two-dimensional scalar curvature at each leaf:
		\begin{equation}\label{eq:lied-R-scri}
			\lied_{\vec{N}}\ctcn{R}{}+\frac{\ctcn{\kappa}{}}{2}\ctcn{R}{}+\ctcn{\kappa}{}\prn{\lied_{\vec{N}}\ctcn{\theta}{}+\frac{\ctcn{\kappa}{}}{2}\ctcn{\theta}{}}-2\ctcn{\kappa}{}\lied_{\vec{\ell}}f+\cdcn{_{M}}\cdcn{^{M}}\ctcn{\kappa}{}-2\cdcn{_{M}}\prn{\ctcn{\kappa}{}}\cdcn{^{M}}\prn{\ln \dot{F}}\eqs 0\ .
		\end{equation}
	Observe that choosing a gauge such that $ \ctcn{\kappa}{}\eqs 0 $ (and, therefore $ f=0=\nu $ too), one gets $ \lied_{\vec{N}}\ctcn{R}{}=0 $, meaning that all leaves are isometric. This kind of partial gauge fixing (any conformal change with $ \vec{N}\prn{\omega}=0 $ is still allowed) is often called \emph{divergence-free} gauge (motivated by \cref{eq:cefesDerN}). In this section, \emph{the gauge is not fixed in any way}. The aim is to show that the news tensor given in \cref{thm:news-N} ---or, equivalently, the combination involving the asymptotic shear in \cref{eq:news-asymptotic-shear}--- determines in full generality the gravitational radiation arriving at infinity. For such goal, first one has to evaluate \cref{eq:C-news,eq:D-news} at $ \scri $. Taking into account all the simplifications listed at the beginning of this section, \cref{eq:D-news} reads
		\begin{align}\label{eq:D-scri-1}
			\ctncn{D}{_{AC}}\eqs& \lied_{\vec{N}}\ctcn{V}{_{AC}}+\lied_{\vec{N}}\ctcn{\rho}{_{AC}}+\lied_{\vec{N}}\prn{\frac{1}{2}\ctcn{\kappa}{}\ctcn{\sigma}{_{AC}}+\frac{1}{4}\mcn{_{AC}}\ctcn{\theta}{}\ctcn{\kappa}{}}-\ctcn{\kappa}{}\lied_{\vec{\ell}}\prn{f}\mcn{_{AC}}-\frac{1}{2}\ctcn{\kappa}{}\prn{\ctcn{V}{_{AC}}+\ctcn{\rho}{_{AC}}}\nonumber\\
			&+\frac{1}{8}\ctcn{\kappa}{}^2\ctcn{\theta}{}\mcn{_{AC}}-\cdcn{_{(C}}\ctcn{\kappa}{}\cdcn{_{A)}}\ln\dot{F}+\frac{1}{2}\cdcn{_{A}}\cdcn{_{C}}\ctcn{\kappa}{}-\frac{1}{2}\ctcn{\theta}{_{AC}}\lied_{\vec{N}}\ctcn{\kappa}{}\ .
		\end{align}
	Taking the trace of this equations and using \cref{eq:lied-R-scri} it follows that $ \ctcn{D}{^{M}_{M}}=0 $ ---indeed, this serves as a consistency check of \cref{eq:D-scri-1}, as $ \ctcn{D}{^M_{M}} $ has to vanish by the traceless property of the Weyl tensor. Then, 
		\begin{align}
			\ctncn{D}{_{AC}}&\eqs \lied_{\vec{N}}\ctcn{N}{_{AC}}+\lied_{\vec{N}}\prn{\ctcn{\rho}{_{AC}}-\frac{1}{2}\ctcn{K}{}\mcn{_{AC}}}+\frac{1}{2}\kappa\lied_{\vec{N}}\prn{\ctcn{\sigma}{_{AC}}}-\frac{1}{2}\ctcn{\kappa}{}\ctcn{N}{_{AC}}\nonumber\\
			&-\frac{1}{2}\ctcn{\kappa}{}\prn{\ctcn{\rho}{_{AC}}-\frac{1}{2}\ctcn{K}{}\mcn{_{AC}}}-\brkt{\cdcn{_{(A}}\prn{\ln\dot{F}}\cdcn{_{C)}}\ctcn{\kappa}{}-\frac{1}{2}\mcn{_{AC}}\cdcn{_{M}}\prn{\ln\dot{F}}\cdcn{^{M}}\ctcn{\kappa}{}}\nonumber\\
			&+\frac{1}{2}\prn{\cdcn{_{A}}\cdcn{_{C}}\ctcn{\kappa}{}-\frac{1}{2}\mcn{_{AC}}\cdcn{_{M}}\cdcn{^{M}}\ctcn{\kappa}{}}\ .\label{eq:D-scri-2} 
		\end{align}

	and  \cref{eq:D-scri-2} can be written as
		\begin{align}
			\ctncn{D}{_{AC}}&\eqs \lied_{\vec{N}}\ctcn{N}{_{AC}}+\lied_{\vec{_{N}}}\prn{\ctcn{\rho}{_{AC}}-\frac{1}{2}\mcn{_{AC}}\ctcn{K}{}}+\frac{1}{2}\ctcn{\kappa}{}\ctcn{F}{_{AC}}\nonumber\\
			&-\brkt{\cdcn{_{(A}}\prn{\ln\dot{F}}\cdcn{_{C)}}\ctcn{\kappa}{}-\frac{1}{2}\mcn{_{AC}}\cdcn{_{M}}\prn{\ln\dot{F}}\cdcn{^{M}}\ctcn{\kappa}{}}+\frac{1}{2}\prn{\cdcn{_{A}}\cdcn{_{C}}\ctcn{\kappa}{}-\frac{1}{2}\mcn{_{AC}}\cdcn{_{M}}\cdcn{^{M}}\ctcn{\kappa}{}}
		\end{align}
	Notice that due to the conformal invariance of the Weyl tensor, $ \ctcn{D}{_{AC}} $ rescales under conformal gauge transformations as $ \ctcng{D}{_{AC}}=\omega^{-1}\ctcn{D}{_{AC}} $. Also, using that $ \ctcn{N}{_{AC}} $ is invariant under conformal rescalings, $ \lied_{\vec{N}}\ct{N}{_{AC}} $ transforms in that same way. This implies that the traceless part of $ \ctcn{\rho}{_{AC}} $ must satisfy the gauge invariant equation
		\begin{align}
			\ct{P}{_{AC}}\eqs&\lied_{\vec{_{N}}}\prn{\ctcn{\rho}{_{AC}}-\frac{1}{2}\mcn{_{AC}}\ctcn{K}{}}+\frac{1}{2}\ctcn{\kappa}{}\ctcn{F}{_{AC}}
			-\brkt{\cdcn{_{(A}}\prn{\ln\dot{F}}\cdcn{_{C)}}\ctcn{\kappa}{}-\frac{1}{2}\mcn{_{AC}}\cdcn{_{M}}\prn{\ln\dot{F}}\cdcn{^{M}}\ctcn{\kappa}{}}\nonumber\\
			&+\frac{1}{2}\prn{\cdcn{_{A}}\cdcn{_{C}}\ctcn{\kappa}{}-\frac{1}{2}\mcn{_{AC}}\cdcn{_{M}}\cdcn{^{M}}\ctcn{\kappa}{}}\label{eq:lied-rho-1}
		\end{align}
	for some unknown tensor $ \ct{P}{_{AC}} $ transforming as $ \ctg{P}{_{AC}}=\omega^{-1}\ct{P}{_{AC}} $. To find this $ \ct{P}{_{AC}} $, it is  enough to evaluate \cref{eq:lied-rho-1} in one gauge. In particular, choosing round cuts and vanishing $ \ctcn{\kappa}{} $, 
		\begin{equation}
			\ct{P}{_{AC}}\eqs0\ .
		\end{equation}
	That is, the tensor $ \ctcn{\rho}{_{AC}} $, whose existence and uniqueness is warranted by \cref{thm:rho-tensor-foliation}, also satisfies 
		\begin{align}
				0\eqs&\lied_{\vec{_{N}}}\prn{\ctcn{\rho}{_{AC}}-\frac{1}{2}\mcn{_{AC}}\ctcn{K}{}}+\frac{1}{2}\ctcn{\kappa}{}\ctcn{F}{_{AC}}
				-\brkt{\cdcn{_{(A}}\prn{\ln\dot{F}}\cdcn{_{C)}}\ctcn{\kappa}{}-\frac{1}{2}\mcn{_{AC}}\cdcn{_{M}}\prn{\ln\dot{F}}\cdcn{^{M}}\ctcn{\kappa}{}}\nonumber\\
				&+\frac{1}{2}\prn{\cdcn{_{A}}\cdcn{_{C}}\ctcn{\kappa}{}-\frac{1}{2}\mcn{_{AC}}\cdcn{_{M}}\cdcn{^{M}}\ctcn{\kappa}{}}\label{eq:lied-rho-2}\ .
		\end{align}
	Observe that the evolution along $ \ct{N}{^{a}} $ of the trace, $ \ctcn{\rho}{^{M}_{M}}=\ctcn{K}{}$, is just given by \cref{eq:lied-R-scri}\footnote{In particular, for a divergence free gauge, $ \ctcn{\kappa}{}=0 $ and one recovers the classic equation by Geroch, $ \lied_{\vec{N}}\ct{\rho}{_{AB}}=0 $}. \Cref{eq:lied-rho-2} implies then that
		\begin{equation}\label{eq:D-news-scri}
			\ctncn{D}{_{AB}}\eqs\lied_{\vec{N}}\ctcn{N}{_{AB}}\ .
		\end{equation}
	One has now to study the other radiative component, \cref{eq:C-news}. Again, applying  all the simplifications available at $ \scri $ in a general conformal gauge, and using the fact that $ \cdcn{_{[R}}\ctcn{\sigma}{_{S]A}}=\mcn{_{A[R}}\cdcn{_{|M|}}\ctcn{\sigma}{^{M}_{S]}} $ in 2 dimensions,
		\begin{equation}\label{eq:C-news-scri}
			\ctncn{C}{_{A}}\eqs -\ctcn{\epsilon}{^{RS}}\cdcn{_{R}}\ctcn{N}{_{SA}}\ .
		\end{equation}
	\Cref{eq:D-news-scri,eq:C-news-scri} show that there is no gravitational radiation arriving at $ \scri $ if and only if $ \ctcn{N}{_{AB}} $ vanishes, thus generalising to arbitrary gauge the classic result of \cite{Geroch1977}. Hence, it is clear that the news defined by \cref{eq:News} rules the gravitational radiation when the null hypersurface corresponds to $ \scri $ with $ \Lambda=0 $. In the bulk, things differ notably and some special cases are studied in the next section.
\subsection{Special cases in the bulk}\label{sec:special}
	In the next, particular cases are considered, all of them treated when there are no matter fields on $ \N $  , i.e., $ \ct{y}{_{\alpha\beta\gamma}}\eqn0\eqn\ct{T}{_{\alpha\beta}} $. Then, for adapted null hypersurfaces $ \N $  with $ \Omega\neq 0 $ ---case \ref{it:adapted1} of \cref{def:adapted-hyp}--- \cref{eq:f-Lambda} yields
		\begin{equation}\label{eq:simp-bulk}
			\cds{_{a}}f=0\ ,
		\end{equation}
	 independently of $ \Lambda $. 
	 Taking this into account and using \cref{eq:cefesDerSchouten,eq:K-Schouten}, \eqref{eq:C-null-misxed} can be expressed as
			\begin{equation}\label{eq:C-K}
				\ctn{C}{^{a}_{c}}=\frac{1}{\Omega}\cts{\epsilon}{^{ars}}\brkt{\cds{_{r}}\ct{K}{_{sc}}+\ctcn{\varphi}{_{s}}\ct{K}{_{rc}}}\ .
			\end{equation}
	Also, contracting with $ \ct{N}{^\alpha} $ the two free indices of \cref{eq:cefesDerN}, and recalling that $ \ct{T}{_{\alpha\beta}}=0 $, one gets $ \ct{N}{^a}\ct{N}{^b}\cts{S}{_{ab}}=0 $, i.e., $ \ct{R}{_{\alpha\beta}}\ct{N}{^{\alpha}}\ct{N}{^{\beta}} =0$. Then, Raychaudhuri equation reads
			\begin{equation}\label{eq:raychaudhuri}
				\lied_{\vec{N}}\ctcn{\kappa}{}\eqn \ctcn{\kappa}{}\ctcn{\nu}{}-\ctcn{\nu}{_{ab}}\ctcn{\nu}{^{ab}}-\frac{1}{2}\ctcn{\kappa}{}^2\ .
			\end{equation}
	The purpose of these examples is not to study the role of news tensor in the bulk in full generality (i.e., by working with \cref{eq:C-news,eq:D-news}), but to give very simplified examples to see whether there is some determining relation between news and the presence of gravitational radiation in the bulk.
	\subsubsection*{Non-expanding and (weakly) isolated horizons}
		For an adapted null hypersurface to have the structure of a non-expanding horizon (NEH) ---for a review on this and other definitions of horizons, see \cite{Ashtekar2004,Booth2005}--- one has to require, additionally, that $ \ctcn{k}{}=0 $. Then,   Raychaudhuri equation \eqref{eq:raychaudhuri} readily implies $ \ctcn{\nu}{_{AB}}=0 $ , that is, $ \ct{K}{_{ab}}=0 $, and \cref{eq:C-K} gives
			\begin{equation}
				\ctn{C}{^{a}_{b}}=0\ .
			\end{equation}
		Hence, $ \ctncn{C}{_{A}}=0 $ and $ \ctncn{D}{_{AB}}=0 $ as well. This can be considered as a non-radiating case (with $ \phi_{3,4}=0 $). However, in general it does not correspond to a case with vanishing news $ \ctcn{N}{_{AB}} $, since \cref{eq:news-asymptotic-shear} reads
			\begin{equation}
				\ctcn{N}{_{AB}}= \frac{1}{2}\mcn{_{AB}}\ctcn{K}{}-\ctcn{\rho}{_{AB}}\ .
			\end{equation}
		By \cref{thm:rho-tensor}, one only gets $ \ctcn{N}{_{AB}}=0 $ when the metric on the leaves of the foliation is that of the round sphere ---which is an extra restriction. Weakly isolated horizons (WIH) and isolated horizons (IH) are more restricted versions of NEH. Thus, the previous analysis holds on these cases too.\\
		
		What this shows is that the role of $ \ctcn{N}{_{AB}} $ on adapted null hypersurfaces in the bulk that are NEH, WIH or IH is different from the case at $ \scri $ with $ \Lambda=0 $. On a NEH there is no radiation crossing the null hypersurface but the news vanishes if and only if the metric on the cuts is round. Contrary, at $ \scri $, the presence of outgoing radiation  is determined by $ \ctcn{N}{_{AB}} $ \emph{without any assumption on the geometry of the cuts} ---and both cases, i.e., presence and absence of gravitational waves are possible. 
	\subsubsection*{Special case of a null marginally trapped tube}
		Instead of asking for a NEH structure, one can consider an adapted null hypersurface foliated by marginally trapped surfaces by requiring $ \ctcn{\theta}{}=0 $ and $ \ctcn{\kappa}{}<0 $. Then, $ \N $ is a null marginally trapped tube (MTT) ---see \cite{Senovilla2012,Senovilla2023} for more details on this class of hypersurfaces and related definitions. The analysis of \cref{eq:C-news,eq:D-news} is still quite involved, and to get a case where the role of the news $ \ctcn{N}{_{AB}} $ has a clearer meaning one has to make further assumptions. One possibility is to consider a round metric on the leaves and $ \ctcn{\sigma}{_{AB}}=0 $. Observe that this restriction is very strong by itself, but the aim is to study what happens in the most simplified cases. Then, on $ \N $, \cref{eq:news-asymptotic-shear} reads
			\begin{equation}
				\ctcn{N}{_{AB}}=-\frac{1}{\Omega}\ctcn{\nu}{_{AB}}\ .
			\end{equation}
		Also, 	observe that taking the covariant derivative of \cref{eq:phi2-R-shear}, and using \cref{eq:simp-bulk}, 
					\begin{equation}
						\cdcn{_{A}}\ct{C}{_{PQ}^{PQ}}=\frac{2}{\Omega}\cdcn{_{A}}\ctcn{\kappa}{}\ ,
					\end{equation}
		which in general is non-vanishing. Therefore
			\begin{align}
				\ctncn{D}{_{AC}}&=\ctcn{E}{^{a}_{A}}\ctcn{E}{^{c}_{C}}\ct{k}{^{b}}\cds{_{b}}\prn{\ctcn{N}{_{ac}}}-\ctcn{N}{_{AC}}\prn{\nu-\ctcn{\kappa}{}}\ \ ,\\
				\ctncn{C}{_{A}}&=\ctcn{\epsilon}{^{SR}}\brkt{\cdcn{_{R}}\ctcn{N}{_{SA}}+\Omega\ctcn{N}{_{AR}}\ctlcn{d}{_{S}}-\frac{1}{2}\mcn{_{AR}}\ctcn{\kappa}{}\ctlcn{d}{_{S}}-\frac{1}{4}\mcn{_{AS}}\cdcn{_{R}}\ct{C}{_{PQ}^{
				PQ}}}\ .
			\end{align}
		It is clear that if $ \ctcn{N}{_{AB}}=0 $, then $ \ctncn{D}{_{AB}}=0 $. Still, the presence of $ \ctlcn{d}{_{A}}\neq 0 $ ---i.e., $ \Omega\phi_{1}\neq 0 $--- and $ \ct{C}{_{PQ}^{PQ}} $ ---i.e, $ \Omega\Re\prn{\phi_{2}}\neq0 $--- makes $ \ctncn{C}{_{A}}\neq 0 $. That is, the presence of `tangential' gravitational radiation and a Coulomb term sources	a `crossing' component. Only if these two last terms are absent, the news tensor $ \ctcn{N}{_{AC}} $ determines the `crossing' radiative components of the rescaled Weyl tensor at $ \N $.  \\
		
		Therefore, as in the case of a NEH, an adapted null hypersurface $ \N $ that is a null MTT also presents differences with respect to the radiative parts of the rescaled Weyl tensor at $ \scri $ in terms of the news tensor. Even though in this case the presence of crossing radiation components is possible, they do not depend only on $ \ctcn{N}{_{AB}} $. At $ \scri $, however, the terms containing $ \ctlcn{d}{_{A}} $ and $ \ct{C}{_{AB}^{AB}} $ vanish and $ \ctcn{N}{_{AB}} $ fully determines the outgoing gravitational radiation, as it is shown in full generality in \cref{sec:infinity}.
\section{Conclusions}
	Motivated by the radiative structure of $ \scri $ with $ \Lambda=0 $ and to study the properties of space-time null hypersurfaces $ \N $, a news tensor has been introduced (\cref{thm:news-N}) on a foliated null hypersurface geometrically, without using the field equations. Its expression in terms of kinematic quantities and the foliation function has been computed \eqref{eq:N-lied-sigma}. After that, assuming the CEFE with arbitrary $ \Lambda $,  an alternative expression in terms of the `asymptotic shear' has been derived \eqref{eq:news-asymptotic-shear}.  This news tensor, defined  on $ \N $, has the correct limit to $ \scri $ when $ \Lambda=0 $ ---in any conformal gauge--- and a generalised transport equation for the Geroch tensor has been given in \cref{eq:lied-rho-2,eq:lied-rho-1} at $ \scri $. However, the scenario in the  bulk has important differences with respect to $ \scri $, and the news tensor fails to determine, in general, the `crossing' radiative parts of the rescaled Weyl tensor on $ \N $. This should not be  surprising, as at $ \scri $ very special simplifications occur (see \cref{sec:infinity}); the most remarkable two are the well known umbilicity of $ \scri $ and the vanishing of the Weyl tensor $ \ct{C}{_{{\alpha\beta\gamma}}^{\delta}} $. It has to be remarked that these two properties are conformal-gauge-independent and that, further simplifications, such as endowing $ \scri $ with the structure of a NEH (or a MTT), are gauge-dependent assumptions. The information about \emph{gravitational radiation at $ \scri $ with $ \Lambda=0 $} is  contained in the rescaled Weyl tensor and \emph{fully determined by the news tensor}, which comes from the gauge-invariant part of the Schouten tensor. This is well known for divergence-free gauges, and here it has been proven in a covariant way for arbitrary conformal gauges.\\
	
	Finally, as a couple of proposals for future work, it would be interesting to study the news tensor on foliated null hypersurfaces in conformal space-time that `touch' $ \scri $  with arbitrary $ \Lambda $, and analyse its limit to infinity. In this way, it would be possible to relate this object with other asymptotic radiation conditions that have not been considered here, like the one presented in \cite{Fernandez-AlvarezSenovilla2020b}, and try to shed some light on the problem of the definition of the asymptotic gravitational energy in the case  $ \Lambda>0 $ \cite{Penrose2011,Szabados2019}. In particular, it would be possible then to explore the asymptotic behaviour of the news tensor on null horizons.

			\subsection*{Acknowledgments}

				 Work supported under Grant Margarita Salas MARSA22/20 (Spanish Ministry of Universities and European Union), financed by European Union -- Next Generation EU.

\printbibliography
\end{document}